\newtheorem{theorem}{Theorem}[section]
\newtheorem{lemma}[theorem]{Lemma}
\newenvironment{proof}[1][Proof]{\begin{trivlist}
\item[\hskip \labelsep {\bfseries #1}]}{\end{trivlist}}
\newcommand{\qed}{\nobreak \ifvmode \relax \else
      \ifdim\lastskip<1.5em \hskip-\lastskip
      \hskip1.5em plus0em minus0.5em \fi \nobreak
      \vrule height0.75em width0.5em depth0.25em\fi}
\date{}
\begin{document}
%\include{rstart}

%\maketitle{Nonlinear spread of rumor and inoculation strategies in the nodes with degree dependent tie strength in complex networks}
%\author{Anurag Singh, Yatindra Nath Singh }
%\date{September 1994}
\title{\bf Nonlinear spread of rumor and inoculation strategies in the nodes with degree dependent tie strength in complex networks}

%for single author (just remove % characters)
\author{
{\rm Anurag Singh, \rm Yatindra Nath Singh}\\
Department of Electrical Engineering,\\
IIT Kanpur,India-208016\\
(anuanu52@gmail.com)}
\maketitle
%%%%%%%%%%%%%%%%%%%%% Publisher's Area please ignore %%%%%%%%%%%%%%%

%\catchline{}{}{}{}{}

%%%%%%%%%%%%%%%%%%%%%%%%%%%%%%%%%%%%%%%%%%%%%%%%%%%%%%%%%%%%%%%%%%%

%\footnote{For the title, try not to
%use more than 3 lines. Typeset the title in 10 pt
%Times Roman, uppercase and boldface.}

%\author{\footnotesize Anurag Singh\footnote{Typeset names in
%10~pt Times Roman, uppercase. Use the footnote to indicate
%the present or permanent address of the author.}}
%\documentclass{article}
%\title{Cartesian closed categories and the price of eggs}
%\author{Jane Doe}
%\date{September 1994}
%\author{Anurag Singh, Yatindra Nath Singh }
%Department of Electrical Engineering,\\
% IIT Kanpur,India-208016\\
%\footnote{State completely without abbreviations, the
%affiliation and mailing address, including country. Typeset in 8~pt
%Times italic.}\\}
%anuragsg@iitk.ac.in
%

%\maketitle

%\begin{history}
%\received{(received date)}
%\revised{(revised date)}
%\accepted{(Day Month Year)}
%\comby{(xxxxxxxxxx)}
%\end{history}

\begin{abstract}
In the classic rumor spreading model, also known as susceptible-infected-removed (SIR) model in which rumor spread out of each node proportionally  to its degree and all edges  have uniform tie strength between nodes. The rumor spread phenomenon is similar as epidemic spread, in which all the informed nodes spread rumor by informing their neighbor nodes. In earlier rumor spreading models, at each time step nodes contact all of their neighbors. In more realistic scenario it is possible that a node may contact only some of its neighbors to spread the rumor. Therefore it is must in real world complex networks, the classic rumor spreading model need to be modified to consider the dependence of rumor spread rate on the degree of the spreader and the informed nodes. We have given a modified rumor spreading model to accommodate these facts. This new model, has been studied for rumor spreading in complex networks in this work. Rumor spreads nonlinearly by pairwise contacts between nodes in complex networks at degree 
dependent spreading rate. Nonlinear rumor spread exponent $\alpha$ and degree dependent tie strength of nodes affect the rumor threshold. The degree exponent tie strength between two nodes is $(k_ik_j)^\beta$, where $k_i$ and $k_j$ are degrees of node \textit{ i} and \textit{j} and $\beta$ is tie strength exponent. By using the above two exponents in any complex network gives rumor threshold as some finite value. In the present work, the modified rumor spreading model has been studied in scale free networks. It has been found to have a greater threshold than with previous classic rumor spreading model. It is also found that if $ \alpha $ and $ \beta $ parameters are tuned to appropriate value, the rumor threshold becomes independent of network size. In any social network, rumors can spread may have undesirable effect. One of the possible solutions to control rumor spread, is to inoculate a certain fraction of nodes against rumors. The inoculation can be done randomly or in a targeted fashion. We have used 
modified rumor spreading model over scale free networks to investigate the efficacy of inoculation. Random and targeted inoculation schemes have been applied. It has been observed that rumor threshold in random inoculation scheme is greater than the rumor threshold in the model without any inoculation scheme. But random inoculation is not that much effective. The rumor threshold in targeted inoculation is very high than the rumor threshold in the random inoculation in suppressing the rumor. It has also been observed that targeted inoculation scheme is more successful to stop the rumor in this model based on scale free networks. The proposed hypothesis is also verified by simulation results.
\end{abstract}

\textbf{Keywords} : Complex networks; Scale free networks; rumor spreading; nonlinear rumor spread; tie strength in complex networks; random and targeted inoculations

\section{Introduction}

Many researchers have discussed, how the properties of networks affect the dynamical process taking place in networks. In the recent years, complex network structures and their dynamics have been studied intensively \cite{Bara,BA,Nekovee1,Nekovee,Newep,New,Past,WS}. By analyzing different real world networks e.g. Internet, the www, social network and so on, researchers have identified different  topological characteristics of complex network such as the small world phenomenon and scale free property. An interesting dynamical process in complex networks is the epidemic spreading. Work on epidemic spreading has been done by many researchers \cite{Morsat,Newep}. There are two models to define the epidemic spreading. In first model- two state susceptible-infected-susceptible (SIS) model, a susceptible node can become infected and an infected node can recover and return to susceptible state, e.g. computer viruses \cite{Past2,Imm1}. The second model, is susceptible-infected-recovered (SIR) model, it defines that 
the infected nodes will become dead \cite{Morsat,Newep}. It is different from SIS model in the fact that the infected nodes will not return to the susceptible state but can attain recovered status. The epidemic spreading on complex networks has been discussed with the other models also \cite{Vespbook}. Rumor spreading is also similar in nature as epidemic spread.

 Rumors are information that circulate without officially publicized confirmation i.e. without has official reputation. Rumors are being used as a special weapon of public opinion and can pose a tremendous impact on social life. Certain rumors can affect the social stability seriously. In order to improve the resistance of the community against undesirable rumors, it is essential to develop deep understanding of the mechanism, underlying laws involved in rumor spreading, establishment of an appropriate prevention and control system to generate social stability. Sudbury studied, first time on the spread of rumors based on SIR model \cite{Sud}.  Another standard model of rumor spreading, was introduced many years ago by Daley and Kendal \cite{DK}. Its variant was introduced by Maki-Thomsan \cite{MK}. In Daley-Kendal (DK) model homogeneous population is subdivided into three groups: ignorant (who don't know rumor), spreaders (who know rumor) and stifler (know rumor but do not want to spread it). The rumor is 
propagated through the population by pairwise contacts between spreaders and other individuals in the population. Any spreader involved in a pairwise meeting attempts to infect other individual with the rumor. In case this other individual is an ignorant, it becomes a spreader. If other individual is a spreader or stifler, it finds that rumor is known and decide not to spread rumor anymore, thereby turning into stifler. In Maki Thomsan (MK) model when spreader contacts another spreader, only the initiating spreader becomes a stifler. DK and MK models have an important shortcoming that they do not take into account the topology of the underlying social interconnection networks along which rumors spread. These models are restricted in explaining real world scenario for rumor spreading. By considering the topology of network, rumor model on small world network \cite{Nekovee,Zanet,Zanet1} and scale free networks \cite{Liu} have been defined. Therefore, as long as one knows the structure of spreading networks, he 
 can figure out variables and observable to conduct quantitative analysis, forecasting and control of rumor spreading. The most important  conclusion of classical propagation theory is existence of critical point of rumor transmission intensity. When an actual intensity is greater than critical value, the rumors can spread in networks and persistently exist. When the actual intensity is less than the critical value, rumors decay at an exponential rate and this critical value is called rumor threshold. In previous studies of rumor spreading, underlying network topologies were scale free  which is also observed in many real networks. But some more properties of real world networks have not been considered. In the earlier rumor spreading models, rumor transmission rate was fixed. In the real scenario, it will be different among different nodes. It will depend on tie strength between nodes and node's strength. In Internet, weight implies the knowledge of its traffic flow or the bandwidth of routers \cite{Int}, 
in the world wide airport networks it can define the importance of an airport \cite{Barrat} and so on. In case of rumor spreading, the strength can indicate the frequency of the contact between two nodes in scale free networks. Greater the strength, the more intensely the two nodes are communicating. Chances of spreading rumors tend to differ among individuals  and laws of spreading in social networks with different topologies are also different. Each informed node can make contacts with all of its neighbors in a single time step. In other words we can say that each informed to number of nodes, can spread information to nodes equal to its degree. In real case, an informed node can't make contact to all of its neighbors in single time step. Studies on small world networks found that compared with regular network, small world network has smaller transmission threshold and faster dissemination. Even at small spreading rates, rumors can exist for long. Studies on infinite-size scale free networks have also 
revealed that no matter how small transmission intensity be, the rumors can be persistent as positive critical threshold does not exist.\cite{Pastep,Past}.
 	
 	In previous studies on rumor spreading in scale free networks, it has been assumed that larger the nodal degree, the greater the rumor spreading from the informed node, i.e. the rumor spread is proportional to the nodal degree. With these assumptions for SIR model, for scale free networks with sufficiently large size, the rumor threshold  $\lambda_c=0$ can be zero. Yan et al.  have demonstrated that the asymmetry of infection plays an important role \cite{Yan}. They redistribute the asymmetry to balance the degree heterogeneity of the network and found finite value of  epidemic threshold. Zhou et al. concluded that this hypothesis is not always correct  \cite{Zhou,Zl}. In rumor spreading contact networks, the hub nodes have many acquaintances; however they cannot contact all their acquaintances in single time step. They assumed that the rumor spreadness is not equal to the degree but identical for all nodes of the scale free networks and obtained the threshold $\lambda_c=\frac{1}{A}$, where \emph{A} is the 
constant infectivity of each node and is not equal to the degree of node. Recently, Fu et al. \cite{Fu} have defined piecewise linear infectivity. They suggested if the degree \textit{k}, of a node is small, its infectivity is $ \alpha k $ otherwise its infectivity is a saturated value \textit{A} when \textit{k} is beyond a constant $ A/\alpha $. In both constant and piecewise linear infectivity, the heterogeneous infectivity of the nodes due to different degrees has not been considered. While in scale free networks heterogenity in nodal degree is very common. There may be nodes with different degrees, which have the same infectivity, and there will be a large number of such nodes if  infectivity does not saturate or the size of network is infinite.

% In scale free networks, a rumor with an arbitrary rate  of spread will exist for a long time. In these networks, rumors first affect individuals who have more social contacts, then the general individuals and finally those with less social contact. It has been found that in scale free networks, rumors spread at a relatively low speed for a very short period of time starting from the outbreak and then rise rapidly to a high peak, followed by a rapid decline in exponentially. Studies of rumor spreading on complex networks are in ascendant and the results have largely changed the views on the issue of rumor spreading, which play an important role in the prevention and control against rumors in such events. The theory and the method of transmission dynamics being applied to the analysis of structure and characteristics of rumor spreading play a vital role in the design of rumors prevention and control system.

% In order to make transmission rate as per the real case, weights of edges and strength of nodes play an important role in the scale free networks. The strength is one of the important point in many real networks. For rumor spreading, the strength can define the extent of frequency of contracting of two nodes in scale free networks; greater the strength, the more intensively two nodes communicate \cite{Barrat}. 

 In scale-free networks, a small number of nodes have very high degrees. Although, random immunization strategy works very well in homogeneous random networks, but this strategy is not effective in preventing a rumor in scale free networks. Hence a new immunization strategy needs to be developed which is able to recover the rumor threshold. One of the most efficient approach is to immune the highest degrees nodes, or, more specifically, to immune those nodes (hereafter termed as hubs or hub nodes) which have degrees higher than a preset cut-off value $k_c$ . Such a strategy is known as targeted immunization \cite{Imm2,CohenRes,Cohenbr,Madhav,Pastep,Imm1,Anu}.
  
  In order to control the epidemic spread of rumors, inoculating the nodes is an option. Random inoculation was found to be ineffective for all the complex networks (includes scale free networks) by Pastor-Satorras and Vespignani \cite{Past}. On the other hand, targeted inoculation is successful in arresting the rumor spread in scale free networks \cite{Past}. Acquaintance inoculation method proposed by Cohen et.al. \cite{Madar} was found effective for both scale free networks as well as networks with bimodal degree distribution. Site percolation is another process inversely similar to inoculation [19]. Inoculated node can be considered to equivalent of removed site in site percolation. After the percolation threshold, the network will have maximum size of giant component \cite{Newep}. Thus if the network operates below the percolation threshold, the rumor spread will not happen as outbreak. In other worlds, inoculation ensures the operation of network below percolation threshold. Random inoculation usually 
requires inoculation of large number of nodes for being effective. If nodes with higher connectivity are targeted for inoculation, the same effectiveness can be achieved with smaller number of inoculated nodes. But it needs knowledge of nodes which have higher connectivity \cite{Madar}.
  
In this work, we have investigated rumor spread for the scale free network while considering varying tie strengths between nodes. Further we have assumed that a non linearly varying number of neighbors are infected with the rumor in each time step. While in the earlier models \cite{Morsat,Nekovee}, tie strength has been considered to be uniform, and a constant number of neighbors have been assumed to be infected in each time step by each node. If a node has \textit{K} neighbors, in the earlier models, in each time step, all the \textit{K} neighbors will be infected. We have modified the earlier SIR model given by Nevokee \cite{Nekovee} and included a rumor spreading exponent $ \alpha $ . In this work, $ K^\alpha $  neighboring nodes will be infected in each time step. Here $ \alpha $ is the spreading exponent where, $ 0<\alpha \leq 1 $. We have used Barabasi-Albert (BA) model \cite{BA} to create scale free networks with power law distribution of nodal degree, and then used the proposed strategy in them to 
study the rumor spread. Scale free networks have been specifically chosen as they are much more heterogeneous than the small world or the random network models, thus forming a good candidate for testing our proposition.

 The dynamical differential equations have been used to represent the modified model for information spread. The equations have been used to find the threshold and study the rumor propagation behavior. The result have also been verified by the simulations
  %     The results of the presented modified model will give some insight into the spread of real rumor spreading over scale free networks. A modified SIR model is given. The classical SIR model, as proposed by Nekovee on complex networks for rumor spreading is modified with nonlinear rumor spread exponent $ \alpha $ and the tie strength exponent $ \beta $ \cite{Nekovee}. In this modified model, the dynamical differential equation for information spreading has been proposed.  
By choosing the appropriate values of $ \alpha $ and $ \beta $, finite non zero threshold value can be found for the scale free networks. It is found that the $\alpha  $ is more sensitive than the $ \beta $ for spreading rumor in a large scale free network. Finally, the rumor threshold has been calculated after applying the random and targeted inoculation of nodes to suppress the harmful rumor in the scale free networks. In the targeted inoculation scheme, the rumor threshold has been found to be larger. The targeted inoculation has also been effective in the scale free networks to suppress rumors.

\section{ Classical SIR model} 
Classical SIR model is one of the most investigated rumor spreading models for complex networks. In this model, nodes are in one of the three categories \textemdash ignorants (the nodes who are ignorant of the rumor), spreaders (those who hear the rumor and also actively spread it) and stiflers (the nodes who hear the rumor but do not spread it further). The rumor is propagated through the nodes by pairwise contacts between the spreaders and other nodes in the network. Following the law of mass action, the spreading process evolves with direct contact of the spreaders with others in the population. These contacts can only take place along the edges of undirected graph of complex network. If the other node is the spreader or stifler then the initiating spreader becomes the stifler. 
%$G=(V,E)$, where \emph{V} and \emph{E} mention the nodes and the edges of the network, respectively.
The classical SIR model has been studied by M. Nevokee et
al.\cite{Nekovee1,Nekovee} for heterogeneous population (nodes have different
degrees). In this paper $I(k,t), S(k,t), R(k,t)$ are the expected values of
ignorants, spreaders and stifler nodes in network with degree $k$ at time $t$.
Above rumor spreading process can be summarized by following set of pairwise
interactions.\\
\begin{eqnarray*}
&&I_1+S_2\xrightarrow{\lambda}S_1 + S_2,\\
&&\mbox{(when spreader meets with the ignorant, it becomes spreader at rate
$\lambda$)}\\
&&S_1+R_2\xrightarrow{\delta}R_1+R_2,\\
&&\mbox{(when a spreader contacts with stifler, the spreader}\\
&&\mbox{becomes a stifler at the rate $\delta$)}\\
&&S_1+S_2\xrightarrow{\delta} R_1+S_2,\\
&&\mbox{(when a spreader contacts with another spreader, initiating spreader}\\
&&\mbox{becomes a stifler at the rate $\delta$)}\\
&&S\xrightarrow{\sigma} R.\\
&&\mbox{($\sigma$ is the rate to stop spreading of a rumor spontaneously.)}
\end{eqnarray*}
Let $\rho^i(k,t)=I(k,t)/N(k), \rho^s(k,t)=S(k,t)/N(k), \rho^r(k,t)=R(k,t)/N(k)$
are the fraction with respect to total nodes, of ignorant, spreaders and stifler
nodes respectively with degree \textit{k} at time \textit{t}. These fractions of
the nodes satisfy the normalization condition $
\rho^i(k,t)+\rho^s(k,t)+\rho^r(k,t)=1$ where, $N(k)$ represents the total number
of nodes with degree $k$, in the network. Nevokee et al. \cite{Nekovee} proposed
the formulation of this model for analyzing complex networks as interacting Markov
chains. They used the framework to derive from the first-principles, mean-field
equations for the dynamics of rumor spreading in complex networks with arbitrary
correlations. These are given below.
 \begin{eqnarray}
&&\frac{d\rho^i(k,t)}{dt}=-k\lambda
\rho^i(k,t)\sum_{l}P(l|k)\rho^s(l,t);\label{1-1}\\
&&\frac{d\rho^s(k,t)}{dt}=k\lambda\rho^i(k,t)\sum_{l}P(l|k)\rho^s(l,
t)-k\delta\rho^s(k,t) \sum_{l}\left ( \rho^s(l,t) \right . \nonumber\\
&&\left . +\rho^r(l,t)\right ) P(l|k)-\sigma\rho^s(k,t) ;\label{1-2}\\
&&\frac{d\rho^r(k,t)}{dt}=k\delta\rho^s(k,t)\sum_{l}(\rho^s(l,t)+\rho^r(l,
t))P(l|k)+\sigma\rho^s(k,t);\label{1-3}
\end{eqnarray}

Where conditional probability $P(l/k)$ is the degree-degree correlation function
that a randomly chosen edge emanating from a node of degree \emph{k} leads to a
node of degree \emph{l}. Here, it has been assumed that the degree of nodes in
the whole network are uncorrelated. Therefore, degree-degree correlation is $
P(l|k)=\frac{lP(l)}{ \langle k\rangle}$ where $P(l)$ is the degree distribution
and  $\langle k\rangle$ is the average degree of the network. Nevokee \emph{et
al.} have shown that the critical threshold for rumor spreading is independent
of the stifling mechanism. The critical threshold found by him was
$\lambda_c=\frac{\langle k\rangle}{\langle k^2\rangle}$. It is same as found
for SIR model \cite{Lloyd,Morsat}. Hence, it implies  epidemic threshold is
absent in large size scale free networks $(\langle k^2\rangle\rightarrow \infty,
\lambda_c\rightarrow 0)$. This result is not good for epidemic control, 
since the epidemics will exist in the real networks for any non zero value of
spreading rate $\lambda$.

%\section{Scale free networks}
The ``scale-free network" (Internet, www etc.) \cite{Bara,BA} describes the degree distribution as a power law, i.e., the probability of a node having  degree \textit{k} 
\begin{equation*}
P(k)\propto k^{-\gamma}
\end{equation*}
Where, $ 2<\gamma \leq 3 $. Degree distribution is plotted on a log - log scale where, $\gamma$ is the slope of the degree distribution, which is a straight line. 
 One can construct a scale-free network by starting with a few nodes $ m_0 $and  then by  adding a new node with \textit{m} ($ m\leq m_0 $) edges  per node at each time step. The probability $ \Pi_v $ of node $ v $ being connected to a new node depends on its degree $ j_v $. The higher its degree, the greater the probability ($ \Pi_v=j_v/\sum_{k} j_k $) that it  will get new connections (preferential attachment).

\section{Modified rumor spreading model}
The real world networks can have the intimacy, confidence etc. between the nodes. Unlike previous studies where each node can spread the rumor with constant transmission rate $\lambda$. In this study, we have concentrated the rumor spreading model with nonlinear rumor spread and transmission rate between two connected nodes is a function of their degrees. Here, to remove the complexity stifling rate $ \delta $ is being considered 0. Based on this assumption, we define,
\begin{eqnarray}
&&\frac{d\rho^i(k,t)}{dt}=-k \rho^i(k,t)\sum_{l}P(l|k)\rho^s(l,t)\frac{\Theta(l)}{l} \lambda_{lk};\\\label{5-1}
&&\frac{d\rho^s(k,t)}{dt}=-\sigma \rho^s(k,t)+ k(1-\rho^s(k,t)-\rho^r(k,t))\sum_{l}P(l|k)\rho^s(l,t)\frac{\Theta(l)}{l} \lambda_{lk} ;\\\label{5-2}
&&\frac{d\rho^r(k,t)}{dt}=\sigma \rho^s(k,t).\label{5-3}
\end{eqnarray}
where $\Theta(k)$ represents the rumor spreadness of a node with degree \textit{l} and $\lambda_{lk}$ represents the rumor spreading rate from nodes of degree \textit{l} to nodes with degree \textit{k}.

\subsection{Tie strength in complex networks}
 The topological properties of a graph are fully encoded in its adjacency matrix \textbf{A} , whose elements $a_{ij}$ ($ i\neq j $) are 1 if a link connects node \textit{i} to node \textit{j}, and 0 otherwise. The indices \emph{i}, \emph{j} run from 1 to \emph{N}, where \emph{N} is the size of the network. Similarly, a weighted network is entirely described by a matrix \textbf{W} whose entry $w_{ij}$ gives the weight on the edge connecting the vertices \emph{i} and \emph{j} ($w_{ij}$ = 0, if the nodes \textit{i} and \textit{j} are not connected). In this study we will consider only the case of symmetric weights ($w_{ij} = w_{ji}$) while the undirected case of the network is considered \cite{Barrat}. In a call network  if two nodes call each other for a long duration then weight of the connecting edge will be high and it shows high tie strength between them \cite{Onnela}. Here weight of the edge in terms of total call duration defines the tie strength between the nodes. It has also been  observed in the 
dependence of the edge weight $w_{ij}$ to define strength between nodes with end point degrees $k_i$ and $k_j$. Weight as a function of the end-point degrees can be well approximated by a power-law dependence,
 \begin{equation*}
 w_{ij} =b(k_ik_j)^\beta\\
 \end{equation*}
 where, $ \beta $ is the  degree influenced real exponent which depends on the type of complex networks and \textit{b} is a positive quantity. When $ \beta>0 $ then rumor transmit to high degree nodes and when $\beta<0  $ then rumor will transmit to low degree nodes. Further, if $ \beta=0 $ there will be degree independent transmission.

% Spreading rate $\lambda$ shows the property of nodes to accept rumor with certain probability, depends on the credibility of rumor. Stifling rate $\alpha$ shows the property of individuals to loose interest in spreading of rumor, when they got it through contact with others. In classical SIR model for heterogeneous population can be described as the following set of mean field equations:
It has been observed that the individual edge weight doesn't provide clear view of network's complexity. A detailed measurement if tie strength using the actual weights is obtained by enhancing the property of a vertex degree $ k_i=\sum_j a_{ij} $ in terms of the vertex strength $ S_i=\sum_{j=1}^{N} a_{ij}w_{ij} $ (total weights of their neighbors). Therefore, there is a coupling between interaction strengths of the nodes with the counterintuitive consequence that social networks are robust enough to the removal of the strong ties but fall apart after a phase transition if the weak ties are removed \cite{Barrat}. Therefore, we can measure the strength of a node of degree \textit{k} for scale free network,
\begin{eqnarray}
S_k &=& k\sum_l P(l|k)w_{kl}\nonumber\\
&=& k\sum_l \frac{lP(l)}{\langle k \rangle}w_{kl}\nonumber\\
&=& b\frac{k^{1+\beta}}{\langle k \rangle}\langle k^{1+\beta}\rangle \label{5-4}
\end{eqnarray}

Here, it has been considered that the rumor spreading model, where rumor transmission rate in contact process between a spreader node and an ignorant node influenced by their degrees. If $ w_{kl} $ is the tie strength between \textit{k}-degree node and \textit{l}-degree node for (\textit{k},\textit{l}) edge, $ S_k $ is the node strength with degree \textit{k}. In scale free network each node of degree \textit{k}, there is a constant rumor transmission rate $ \lambda k $. Therefore, rumor transmission rate from \textit{k}-degree node to \textit{l}-degree node is given by the proportion of $ w_{kl} $ to $ S_k $. Hence, $ \lambda_{kl} $ can be defined as,
\begin{equation}
\lambda_{kl}=\lambda k \frac{w_{kl}}{S_k} \label{5-5}
\end{equation}
We can see in Eq. \eqref{5-5} that by increasing the proportion of $ w_{kl}/S_k $, the more possibility of rumor transmission rate can be increases through the edge. In the present work, uncorrelated networks have been considered, hence  $\lambda_{kl}=\lambda l^\beta\langle k\rangle/\langle k^{1+\beta}\rangle$.
%The rumor spreading in scale free networks is not symmetric always, i.e. the traffic from node \emph{i} to one of its neighbor node \emph{j} ($w_{ij}$)does not always equal that to its another neighbor \emph{l} ($w_{il}$). In most of the previous researches it has been assumed that $w_{ij}=w_{il}$ for any two neighbors \emph{j} and \emph{l} of node \emph{i}. Here asymmetrical rumor propagation for rumor spreading threshold and phase transition has been considered.
%$w_{kl}=w_0(kl)^\beta$ $S_k=k\sum_{l}P(l/k)w_{kk'}$   $P(l/k)=lP(l)/\langle k \rangle$   $S_k=w_0\frac{\langle k^{1+\beta}\rangle k^{1+\beta}}{\langle k\rangle}$
%\begin{equation}
%\lambda_{kl}=\lambda k\frac{w_{kl}}{S_k}
%\end{equation}
%\begin{equation}
%\Theta(k)=k^\alpha, 0<\alpha \leq 1
%\end{equation}
%\begin{eqnarray}
%\Theta(k)=k^a, 0<a\leq1
%\end{eqnarray}
In this model, rumor spreadness, $ \Theta(k)=k^\alpha $ where $ 0<\alpha \leq1 $, it defines that each spreader node may contact with $ k^\alpha $ neighbors within one time step. Therefore, spreadness of a rumor will vary nonlinearly with the growing degree \emph{k}.
 In Eqs. \eqref{5-1}-\eqref{5-3} by putting $ \sigma=1 $ (without the lose of generality), we can write rumor equation for $ \Theta(k) $ and $ \lambda_{lk} $ can be written as,
 \begin{eqnarray}
&&\frac{d\rho^i(k,t)}{dt}=-\frac{\lambda k^{1+\beta}}{\langle k^{1+\beta}\rangle}\rho^i(k,t)\sum_l l^\alpha P(l)\rho^s(l,t);\\\label{5-6}
&&\frac{d\rho^s(k,t)}{dt}=-\rho^s(k,t) + \frac{\lambda k^{1+\beta}}{\langle k^{1+\beta}\rangle}\rho^i(k,t)\sum_l l^\alpha P(l)\rho^s(l,t);\\\label{5-7}
&&\frac{d\rho^r(k,t)}{dt}=\rho^s(k,t)\label{5-8}
\end{eqnarray}
 Here,  $ \Phi(t)=\sum_{k}k^\alpha P(k)\rho^s(k,t) $ has been used as auxiliary function. After solving rumor Eqs. \eqref{5-6}-\eqref{5-8} with initial conditions $ \rho^i(k,0)\simeq  1, \rho^s(k,0)\simeq  0, \rho^r(k,0)\simeq  0 $ we get,
\begin{equation}
\rho^i(k,t)=e^{\frac{-\lambda k^{1+\beta}}{\langle k^{1+\beta}\rangle}\Psi(t)}\label{5-9} 
\end{equation}
where,
\begin{equation}
\Psi(t)=\int_{0}^{t}\Phi(t)dt=\sum_{k}k^\alpha P(k)\rho^r(k,t)\label{5-10}
\end{equation}
 
\section{Rumor threshold of the modified model}
Time derivative of Eq. \eqref{5-10} can be obtained by, 
\begin{eqnarray}
\frac{d\Psi(t)}{dt}&=&\sum_{k}k^\alpha P(k)\frac {d\rho^r(k,t)}{dt}\nonumber\\
\mbox{from Eq. \eqref{5-10},}\nonumber\\
&=&\sum_{k}k^\alpha P(k)\rho^s(k,t)\nonumber\\
&=&\sum_{k}k^\alpha P(k)(1-\rho^r(k,t)-\rho^i(k,t))\nonumber\\
&=&\sum_{k}k^\alpha P(k)-\sum_{k}k^\alpha P(k)\rho^r(k,t)-\sum_{k}k^\alpha P(k)\rho^i(k,t))\nonumber\\
&=&\langle k^{\alpha}\rangle -\Psi(t)-\sum_{k}k^\alpha P(k)\rho^i(k,t))\nonumber\\
&=&\langle k^\alpha \rangle-\Psi(t)-\sum_{k}k^\alpha P(k)e^{-\frac{\lambda k^{1+\beta}}{\langle k^{1+\beta}\rangle}\Psi(t)} \label{6-1}
\end{eqnarray}
 In the infinite time limit, i.e., at the end of rumor spreading, we will have $S_k(\infty)$=0, $lim_{t\rightarrow \infty}\Psi(t) \rightarrow \psi$ and $lim_{t\rightarrow \infty}d\Psi/dt$=0,
\begin{eqnarray}
0&=&\langle k^\alpha \rangle-\Psi-\sum_{k}k^\alpha P(k)e^{-\frac{\lambda k^{1+\beta}}{\langle k^{1+\beta}\rangle}\Psi}\nonumber\\
\Psi&=&\langle k^\alpha \rangle-\sum_{k}k^\alpha P(k)e^{-\frac{\lambda k^{1+\beta}}{\langle k^{1+\beta}\rangle}\Psi}\label{6-2}
\end{eqnarray}

We can conclude that $ \Psi=0 $ will always be the solution of above equation. It has been found that right hand side of Eq. \eqref{6-2} is a convex function. Therefore the derivative of Eq. \eqref{6-2} with respect to $ \Psi $ will always be in order to achieve greater than one to get non zero solution,
\begin{equation}
\frac{d}{d\Psi}(\langle k^\alpha \rangle-\sum_k k^\alpha P(k)e^{-\frac{\lambda k^{1+\beta}}{\langle k^{1+\beta}\rangle}\Psi})|_{\Psi=0} >1 \label{6-3}
\end{equation}

Hence,
\begin{equation}
\sum_k P(k)\lambda \frac{k^{\alpha+\beta+1}}{\langle k^{1+\beta}\rangle}=\lambda \frac{\langle k^{\alpha+\beta+1}\rangle}{\langle k^{1+\beta}\rangle}>1\label{6-4}
\end{equation}
Using above equation, the rumor threshold can be defined as,
\begin{equation}
\lambda_c= \frac{\langle k^{\beta+1}\rangle}{\langle k^{\alpha+\beta+1}\rangle}\label{6-5}
\end{equation}

It is interesting to note that, by putting, $ \alpha=1 $ and $ \beta=0 $ in Eq. \eqref{6-5}, the threshold for this model reduces to $ \langle k \rangle/\langle k^2 \rangle $ for classical rumor spreading model\cite{Past}.

When, $ t \rightarrow \infty $ spreader nodes will be 0, ($ \rho^s(k,\infty)=0 $) and from Eq. \eqref{5-9}, $\rho^i(k,\infty)=e^{\frac{-\lambda k^{1+\beta}}{\langle k^{1+\beta}\rangle}\Psi}$. Therefore, final size of rumor \textit{R} at $ t \rightarrow \infty$ ($ lim_{t\rightarrow \infty}\rho^r(k,t)=R $),
\begin{eqnarray}
R &=&\sum_k P(k)\rho^r(k,\infty)\\
&=&\sum_k P(k)(1-\rho^s(k,\infty)) \nonumber \\
&=&\sum_k P(k)(1-e^{\frac{-\lambda k^{1+\beta}}{\langle k^{1+\beta}\rangle}\Psi}) \nonumber \\
&=&1-\sum_k P(k)e^{\frac{-\lambda k^{1+\beta}}{\langle k^{1+\beta}\rangle}\Psi}\\ \label{6-19}
\end{eqnarray}

In the real world complex networks rumor spreads on a finite size complex networks. It may be possible that size of scale free network is very large.The maximum or minimum degree of scale free network is mentioned by $ k_{max}$ or $k_{min} $. Pastor et al. found that the epidemic threshold $ \lambda_c$ for $k_{max} $ for SIS model on bounded SF networks with $P(k)\sim k^{-2-\gamma^\prime}$, $ 0<\gamma^\prime \leq 1 $. They assumed that with the soft and hard cut-off $ k_{min}$ and $ k_{max} $, when $ \alpha=1 $. The hard cut-off denotes that, a network does not possess any node with degree $ k>k_{max}$. As $ k_{max} $ of a node is network age, defined in the terms of number of nodes \textit{N},
\begin{equation}
k_{max}=k_{min}N^\frac{1}{\gamma^\prime +1 }\label{6-6}
\end{equation}
The normalized degree distribution is defined by,
\begin{equation}
P(k)=\frac{(1+\gamma^\prime)k_{min}^{1+\gamma^\prime}}{1-(k_{max}/k_{min})^{-1-\gamma^\prime}}k^{-2-\gamma^\prime}\theta(k_{max}-k)\label{6-7}
\end{equation}
Where $ \theta(x) $ is a heavy side step function \cite{Pastep} .

In modified rumor spreading model if $\alpha=1$ and $\beta=0 $ then it will be classic rumor spreading model and degree distribution in scale free networks $ P(k)=k^{-\gamma} $ where, $ 2\leq \gamma \leq 3 $, therefore  
\begin{eqnarray}
{\lambda_c}'(k_{max})&=&\frac{\langle k \rangle}{\langle k^2 \rangle}\\\label{6-8}
&=&\frac{\int_{k_{min}}^{k_{max}} k^{1-\gamma} dk}{\int_{k_{min}}^{k_{max}}k^{2-\gamma} dk}\\\label{6-9}
&\simeq &\frac{3-\gamma}{{(\gamma-2)} k_{min}}(k_{max}/k_{min})^{\gamma-3} \label{6-10}
\end{eqnarray}
Eq.\eqref {6-6} is modified for the given scale free network as,
\begin{equation}
k_{max}=k_{min}N^\frac{1}{\gamma -1 }\label{6-new}
\end{equation}
\begin{equation}
{\lambda_c}'(N)\simeq \frac{3-\gamma}{{(\gamma-2)} k_{min}}(N)^{(\gamma-3)/(\gamma-1)}\label{6-11}
\end{equation}
for $ \gamma =3 $,
\begin{equation}
{\gamma_c}'(N)\simeq 2[k_{min}ln(N)]^{-1} \label{6-12}
\end{equation}
Eqs. \eqref{6-11}-\eqref{6-12} shows that $ \lambda^\prime_c \rightarrow 0 $ if $ N\rightarrow \infty $

In modified rumor spreading model, nonlinear rumor spread is considered with  $ \Theta(k)=k^{\alpha} $ ,
\begin{eqnarray}
{\lambda_c}^\#(k_{max}) &=&\frac{\int_{k_{min}}^{k_{max}}k^{\beta+1-\gamma} dk}{\int_{k_{min}}^{k_{max}}k^{\alpha+\beta+1-\gamma} dk} \nonumber\\
&=& k_{min}^{(-\alpha)}\frac{\alpha+\beta-\gamma+2}{\beta-\gamma+2}\frac{[(k_{max}/k_{min})^{\beta-\gamma+2}-1]}{[(k_{max}/k_{min})^{\alpha+\beta-\gamma+2}-1]}\label{6-14}
\end{eqnarray}
\begin{theorem}
In classic rumor spread model ($ \alpha=1, \beta=0 $) threshold is smaller than the modified rumor spread model ($ 0<\alpha<1$ and $\beta \neq 0 $).
\end{theorem}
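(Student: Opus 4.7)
The plan is to reduce the claim to a single moment inequality and then exploit the scale-free structure. From \eqref{6-5} the modified threshold is $\lambda_c^{\#} = \langle k^{\beta+1}\rangle/\langle k^{\alpha+\beta+1}\rangle$, and the classical threshold is the $(\alpha,\beta)=(1,0)$ specialization, namely $\lambda_c' = \langle k\rangle/\langle k^2\rangle$ from \eqref{6-8}. Thus the assertion $\lambda_c' < \lambda_c^{\#}$ is equivalent to
\[
\langle k\rangle\,\langle k^{\alpha+\beta+1}\rangle \;<\; \langle k^2\rangle\,\langle k^{\beta+1}\rangle,
\]
so the entire theorem collapses to verifying this four-moment inequality under the constraints $0<\alpha<1$ and $\beta\neq 0$ on a scale-free degree distribution. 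I would make this inequality the target.

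My first approach would be a direct computation using the explicit integral representations that the paper already exploits to derive \eqref{6-10} and \eqref{6-14}. I would substitute $P(k)\propto k^{-\gamma}$ on $[k_{min},k_{max}]$, express each of the four moments as $\int_{k_{min}}^{k_{max}} k^{r-\gamma}\,dk$, and reduce the ratio $\lambda_c^{\#}/\lambda_c'$ to a rational function of the single scale variable $x=k_{max}/k_{min}$ with parameters $\gamma,\alpha,\beta$. In the large-network regime $x\gg 1$ with $2<\gamma\leq 3$ one can then read off leading-order exponents $x^{\gamma-3}$ (classical) versus roughly $x^{-\alpha}$ (modified, when $\beta$ is such that the upper limit dominates both integrals in \eqref{6-14}), and conclude that the modified threshold decays more slowly, so the ratio exceeds $1$.

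As a more conceptual backup I would invoke log-convexity of $p\mapsto \log\langle k^p\rangle$ (Lyapunov's inequality). Taking logarithms, the desired bound reads
\[
\log\langle k^{\alpha+\beta+1}\rangle - \log\langle k^{\beta+1}\rangle \;<\; \log\langle k^2\rangle - \log\langle k^1\rangle,
\]
i.e.\ the increment of a convex function over an interval of length $\alpha<1$ centered near $\beta+1$ is less than its increment over $[1,2]$ of length $1$. When $\beta<0$ the target interval is both shorter and shifted left relative to $[1,2]$, so convexity gives the inequality for free. The extra factor $\alpha$ is the slack that must absorb any rightward shift.

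The main obstacle is precisely the $\beta>0$ regime: convexity makes the local slope of $\log\langle k^p\rangle$ nondecreasing in $p$, so shifting the interval right (the case $\beta>0$) works against us, and only the $\alpha<1$ contraction of the interval length pulls the secant slope back down. Consequently one cannot finish by pure log-convexity; one must use the actual scale-free form to show that on $P(k)\propto k^{-\gamma}$ with $\gamma\in(2,3]$ the contraction in length dominates the growth in local slope. I would handle this by plugging the integral expressions for the moments into the reformulated inequality, clearing denominators, and reducing to a single elementary polynomial inequality in $x=k_{max}/k_{min}$; checking that inequality in the relevant parameter window $0<\alpha<1$, $\beta\neq 0$, $2<\gamma\leq 3$ is the finicky but routine endgame, and it is also the place where the precise range of admissible $\beta$ values would get pinned down if needed.
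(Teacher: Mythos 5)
Your reduction to the four-moment inequality $\langle k\rangle\langle k^{\alpha+\beta+1}\rangle < \langle k^2\rangle\langle k^{\beta+1}\rangle$ is correct, and your first approach---writing each moment as $\int_{k_{min}}^{k_{max}}k^{r-\gamma}\,dk$ and tracking the threshold ratio as a function of $x=k_{max}/k_{min}$---is essentially the route the paper itself takes (Eqs.~\eqref{6-10}, \eqref{6-14}--\eqref{6-16}, and the ratio \eqref{6-18}). The log-convexity (Lyapunov) observation is a genuinely cleaner argument for the case $\beta<0$, $\alpha<1$, and you correctly diagnose that it cannot close the $\beta>0$ case on its own.

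The genuine gap is the deferred ``finicky but routine endgame,'' which is not routine: the leading-order comparison you sketch actually points the wrong way on part of the stated parameter range. When both integrals defining $\lambda_c^{\#}$ are dominated by their upper limits you get $\lambda_c^{\#}\sim x^{-\alpha}$ against $\lambda_c'\sim x^{\gamma-3}$, so the modified threshold decays \emph{more slowly} only when $\alpha<3-\gamma$; for $\gamma=2.4$, $\alpha=0.9$, $\beta=0.5$ it decays faster and the claimed inequality reverses for large $x$. Equivalently, in the regime $\alpha+\beta+2>\gamma$ with $\beta+2<\gamma$ one finds $\lambda_c'/\lambda_c^{\#}\sim x^{\alpha+\beta-1}$, which diverges whenever $\alpha+\beta>1$---a region fully compatible with $0<\alpha<1$ and $\beta\neq 0$. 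So the statement, read literally over all admissible $(\alpha,\beta)$, fails for large networks, and no polynomial manipulation in $x$ will rescue it; a correct proof must restrict the parameters, e.g.\ to $\alpha+\beta+2<\gamma$, where Eq.~\eqref{6-16} gives a strictly positive $N$-independent $\lambda_c^{\#}$ while $\lambda_c'\to 0$ and the conclusion is immediate, or otherwise to finite fixed $N$ with $\alpha+\beta<1$. The paper's own proof shares this defect---Eq.~\eqref{6-18} is asserted to be less than $1$ ``for finite scale free networks'' without derivation and only in one case---so your proposal at least localizes where the difficulty sits, but it does not close it, and as written its large-$x$ conclusion is incorrect.
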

The proof is given in the next section after lemmas.

\begin{lemma}
When the size of network (\textit{N}) increases, the value of critical threshold $ \lambda^\#_c > 0$ for $ \alpha+\beta+2<\gamma $, otherwise it will approaches to 0.
\end{lemma}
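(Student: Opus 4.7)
The plan is to read the lemma as a statement about the asymptotic behavior, as $N\to\infty$, of the closed-form expression for $\lambda_c^{\#}$ already derived in Eq.~\eqref{6-14}, together with the growth law $k_{max}=k_{min}N^{1/(\gamma-1)}$ from Eq.~\eqref{6-new}. Since the $k_{min}^{-\alpha}$ and the fractional prefactor $(\alpha+\beta-\gamma+2)/(\beta-\gamma+2)$ are constants independent of $N$, the entire question reduces to studying the ratio
\[
F(x)\;=\;\frac{x^{\beta-\gamma+2}-1}{x^{\alpha+\beta-\gamma+2}-1},\qquad x=\frac{k_{max}}{k_{min}}\to\infty,
\]
as $x\to\infty$. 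I would just set up this substitution and reduce the lemma to a case analysis on the signs of the two exponents $\beta-\gamma+2$ and $\alpha+\beta-\gamma+2$.

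First, the regime $\alpha+\beta+2<\gamma$. Here $\alpha+\beta-\gamma+2<0$, and since $\alpha>0$ one also has $\beta-\gamma+2<\alpha+\beta-\gamma+2<0$. Both $x^{\beta-\gamma+2}$ and $x^{\alpha+\beta-\gamma+2}$ therefore tend to $0$, so $F(x)\to 1$. Substituting back into Eq.~\eqref{6-14} gives
\[
\lim_{N\to\infty}\lambda_c^{\#}\;=\;k_{min}^{-\alpha}\,\frac{\alpha+\beta-\gamma+2}{\beta-\gamma+2}\;>\;0,
\]
where positivity follows because the numerator and denominator of the prefactor are both negative. This yields the first half of the lemma.

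Next, the complementary regime $\alpha+\beta+2\ge\gamma$. I would split it into the natural subcases. If $\beta-\gamma+2>0$ then both exponents are positive and a direct comparison of leading terms gives $F(x)\sim x^{-\alpha}\to 0$. If $\alpha+\beta-\gamma+2>0$ but $\beta-\gamma+2<0$, then the numerator of $F$ tends to $-1$ while the denominator blows up, so $F(x)\to 0$. The boundary cases where $\alpha+\beta-\gamma+2=0$ or $\beta-\gamma+2=0$ are handled by the same reduction, replacing the offending power by a logarithm (one would redo the integral in Eq.~\eqref{6-14} with the log form), and one still gets $F(x)\to 0$ up to logarithmic corrections. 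In every subcase $\lambda_c^{\#}\to 0$, proving the second half.

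The argument is essentially routine asymptotics, so no single step is a genuine obstacle; the only thing to be careful about is sign-tracking in the prefactor $(\alpha+\beta-\gamma+2)/(\beta-\gamma+2)$, and making sure that in the ``subcritical'' regime $\alpha+\beta+2<\gamma$ both factors carry the same sign so that the limit is genuinely positive rather than just nonzero. For this reason I would state explicitly, before starting the case analysis, the chain of inequalities $\beta-\gamma+2<\alpha+\beta-\gamma+2<0$ that holds under the hypothesis $\alpha+\beta+2<\gamma$ together with $\alpha>0$.
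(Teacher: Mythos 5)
Your proposal is correct and follows essentially the same route as the paper: substitute $k_{max}/k_{min}=N^{1/(\gamma-1)}$ into Eq.~(24) and take the $N\to\infty$ limit of the ratio by cases on the signs of the exponents $\beta-\gamma+2$ and $\alpha+\beta-\gamma+2$, arriving at the same three-case summary. If anything, your write-up is tighter than the paper's own proof, which asserts an extraneous assumption $\beta<0$ and does not explicitly treat the subcase $\beta-\gamma+2>0$ or the logarithmic boundary cases, whereas you make the sign chain $\beta-\gamma+2<\alpha+\beta-\gamma+2<0$ explicit to guarantee strict positivity of the limit.
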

\begin{proof}
Since $ k_{max}/k_{min}=N^{1/{\gamma-1}} $, therefore  $ k_{max}/k_{min}$ increases when \textit{N} increases, it becomes infinity when $ N \rightarrow \infty $. When, $ \alpha+ \beta+2<\gamma, (\frac{k_{max}}{k_{min}})^{\beta -\gamma+2} =(\frac{k_{max}}{k_{min}})^{\alpha +\beta -\gamma+2} = 0$. The value of $ \lambda^\#_c $ will be positive. Here, $ \beta<0 $ (rumor transmission influenced to low degree nodes) is considered. Now from Eq. \eqref{6-14}, we can conclude that $ \lambda^\#_c $ will be positive . For $ \alpha+ \beta+2\geq \gamma$, $ \lambda^\#_c \rightarrow 0 $ when \textit{N} increases. It can be summarized as,
\begin{equation}
\hspace*{0.8cm}{\lambda_c}^\#(k_{max})\ =\ %\left \{
  \begin{cases}%{array}{cc}
  k_{min}^{(-\alpha)}\frac{\alpha+\beta-\gamma+2}{\gamma-\beta-2}(k_{max}/k_{min})^{\gamma- \alpha-\beta-2}, & \ \alpha+ \beta + 2 > \gamma\\
   k_{min}^{(-\alpha)}\frac{\gamma-\alpha-\beta-2}{\gamma-\beta-2}, & \ \alpha+ \beta +2< \gamma \\
   k_{min}^{(-\alpha)}\frac{1}{\alpha ln(k_{max}/k_{min})} , & \ \alpha+ \beta+2 = \gamma
    \end{cases}\label{6-15}
    \end{equation}
\end{proof}
   
%\begin{lemma}
%Since $ k_{max/k_{min}}=N^{1/{\gamma-1}} $ therefore $ k_{max/k_{min}}$ increases when N increases, it becomes infinity when $ N \rightarrow \infty $. When, $ \alpha+ \beta+2<\gamma, (\frac{k_{max}}{k_{min}})^{\beta -\gamma+2} =(\frac{k_{max}}{k_{min}})^{\alpha +\beta -\gamma+2} = 0$. The value of $ \lambda^\# $ will be positive. In other cases it will tends to 0.
%\end{lemma}
\begin{lemma}
In given rumor spreading model when $\alpha+ \beta +2< \gamma$ then rumor spreading threshold $ \lambda^\# $ is independent from the size of scale free network (\textit{N}).
\begin{proof}
It may also be defined using Eqs. \eqref{6-new}-\eqref{6-15}  in the term of the number of nodes \textit{N},

\begin{equation}
\hspace*{0.8cm}{\lambda_c}^\#(N)\ =\ %\left \{
  \begin{cases}%{array}{cc}
  k_{min}^{(-\alpha)}\frac{\alpha+\beta-\gamma+2}{\gamma-\beta-2}(N)^{(\gamma- \alpha-\beta-2)/(\gamma-1)}, & \ \alpha+ \beta+2> \gamma\\
   k_{min}^{(-\alpha)}\frac{\gamma-\alpha-\beta-2}{\gamma-\beta-2}, & \ \alpha+ \beta+2 < \gamma \\
   k_{min}^{(-\alpha)}\frac{\gamma-1}{\alpha ln(N)} , & \ \alpha+ \beta = \gamma
    \end{cases}\label{6-16}
    \end{equation}
  Here, it is found that for $ \alpha+\beta+2<\gamma $, $ \lambda^\#_c $ is independent of \textit{N}.
\end{proof}
\end{lemma}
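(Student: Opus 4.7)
The plan is to take Eq.~\eqref{6-14}, which already writes $\lambda_c^\#$ in closed form as a function of $k_{max}/k_{min}$, rewrite it in terms of $N$ using the age-size relation $k_{max}=k_{min}N^{1/(\gamma-1)}$ from Eq.~\eqref{6-new}, and then specialize to the regime $\alpha+\beta+2<\gamma$ to show that the resulting expression has no residual $N$-dependence in the large-network limit.

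The key observation is a sign check on the two exponents appearing inside Eq.~\eqref{6-14}. Under the hypothesis $\alpha+\beta+2<\gamma$, together with $\alpha>0$, both $\beta-\gamma+2<0$ and $\alpha+\beta-\gamma+2<0$. Since $\gamma>2$ makes the substitution exponent $1/(\gamma-1)$ strictly positive, each power of $k_{max}/k_{min}$ appearing in the bracketed ratio of Eq.~\eqref{6-14} tends to $0$ as $N\to\infty$. Consequently the ratio $[(k_{max}/k_{min})^{\beta-\gamma+2}-1]/[(k_{max}/k_{min})^{\alpha+\beta-\gamma+2}-1]$ converges to $(-1)/(-1)=1$, and only the prefactor $k_{min}^{-\alpha}(\alpha+\beta-\gamma+2)/(\beta-\gamma+2)$ survives. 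Flipping the sign in both numerator and denominator rewrites this as $k_{min}^{-\alpha}(\gamma-\alpha-\beta-2)/(\gamma-\beta-2)$, which is manifestly independent of $N$.

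There is essentially no genuine obstacle beyond algebraic bookkeeping of signs; the substantive case analysis has already been absorbed into the preceding lemma. The only point that requires a moment of care is verifying that $\gamma-\beta-2>0$ (which follows from the hypothesis $\alpha+\beta+2<\gamma$ combined with $\alpha>0$), so that the limiting constant is a strictly positive finite number rather than zero or undefined. To round the argument out, I would also remark that by direct comparison with the other two branches of Eq.~\eqref{6-16}, $N$-independence fails on the critical line $\alpha+\beta+2=\gamma$ (a $1/\ln N$ decay appears) and above it (a power-law decay in $N$ appears), which confirms that the invariance claimed here is strictly a subcritical phenomenon and not an artifact of the particular form of Eq.~\eqref{6-14}.
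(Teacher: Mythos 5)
Your proposal is correct and follows essentially the same route as the paper: substitute the age--size relation $k_{max}=k_{min}N^{1/(\gamma-1)}$ into the closed form \eqref{6-14} and observe that when $\alpha+\beta+2<\gamma$ both exponents $\beta-\gamma+2$ and $\alpha+\beta-\gamma+2$ are negative, so only the $N$-independent prefactor $k_{min}^{-\alpha}(\gamma-\alpha-\beta-2)/(\gamma-\beta-2)$ survives, exactly as in the middle branch of \eqref{6-16}. Your added check that $\gamma-\beta-2>\alpha>0$, guaranteeing the limiting constant is positive and finite, is a worthwhile point the paper leaves implicit.
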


    \begin{proof}
     Now using lemmas (1) and (2)the theorem can be proved for $\alpha+\beta-\gamma+2>\gamma$. The ratio of rumor threshold in classic model and given model is given as,
%      It has been observed that positive critical value ${\lambda_c}^\#(N)  $ is not related to the size of network when$  \alpha+ \beta < \gamma $ ,therefore
%      \begin{equation}
%      \lambda_c=\frac{\alpha+ \beta -\gamma+2}{\beta -\gamma+2}\frac{{k_{max}}^{\beta-\gamma+2}-{k_{min}^{\beta-\gamma+2}}}{{k_{max}}^{\alpha+\beta-\gamma+2}-{k_{min}^{\alpha+\beta-\gamma+2}}}\label{6-17}
%      \end{equation}
%      if $ k_{max} \rightarrow \infty or N \rightarrow \infty$ than $\lambda_c \rightarrow 0  $,
%          
%    \end{lemma}

\begin{equation}
\frac{{\lambda_c}(N)}{{\lambda_c}^\#(N)}=\frac{(2-\gamma)(\gamma-\beta-2)}{(\gamma-2) k_{max}^{(1-\alpha)}(\alpha+\beta-\gamma+2)N^(1-\gamma-\beta+2)/(\gamma-3)}\\\label{6-18}
\end{equation}

It has been found from Eq. \eqref{6-18} that  $\frac{{\lambda_c}(N)}{{\lambda_c}^\#(N)} < 1$ for finite scale free networks. Therefore, it has been justified that rumor threshold $ \lambda^\#_c(N) $ is greater than the $ \lambda_c(N) $ in finite size scale free networks. In finite size scale free networks, when $ 0<\alpha <1 $, $ \beta \neq 0 $ and $ \alpha+\beta+2>\gamma $ then it is  hard to spread rumor in comparison to networks have $ \alpha=1 $ and $ \beta=0 $. Finite rumor threshold is possible for any size of networks as seen in Eq. \eqref{6-16}. However, it will be 0 when \textit{N} approaches infinity. 
\end{proof}

\section{Random inoculation}
 In random inoculation strategy, randomly selected node will be inoculated. This approach inoculates a fraction of the nodes randomly, without any information of the network. Here variable \textit{g} ($0\leq g\leq 1$) defines the fraction of inoculative nodes. In the presence of random inoculation rumor spreading rate $ \lambda $ reduced by a factor $ (1-g) $. In mean field level, for the scale free networks in the case of random inoculation, the rumor equations are modified using initial conditions as,
 \begin{eqnarray}
 &&\frac{d\rho^i(k,t)}{dt}=-\frac{\lambda(1-g) k^{1+\beta}}{\langle k^{1+\beta}\rangle}\rho^i(k,t)\Phi(t);\\\label{7-1}
&&\frac{d\rho^s(k,t)}{dt}=-\rho^s(k,t) + \frac{\lambda(1-g) k^{1+\beta}}{\langle k^{1+\beta}\rangle}\rho^i(k,t)\Phi(t);\\\label{7-2}
&&\frac{d\rho^r(k,t)}{dt}=\rho^s(k,t).\label{7-3}
\end{eqnarray}

 Now $\Psi$ is modified as (from Eq. \eqref{6-2}),
 \begin{eqnarray}
 \Psi&=&\langle k^\alpha \rangle-\sum_{k}k^\alpha P(k)e^{\frac{-\lambda(1-g) k^{1+\beta}}{\langle k^{1+\beta}\rangle}\Psi} \label{7-4}
 \end{eqnarray}
%  where, $P(k)$ will be the degree distribution of scale free network after inoculating \textit{g} fraction of nodes. 
Therefore, final size of informed nodes (\textit{R}) is,
  \begin{eqnarray}
  R = 1- \sum_{k}P(k)(1-g)e^{\frac{-\lambda(1-g) k^{1+\beta}}{\langle k^{1+\beta}\rangle}\Psi}-g \label{7-5}
  \end{eqnarray}
  To achieve nontrivial solution of $ \Psi $ from Eq. \eqref{7-4},
  \begin{equation}
\frac{d}{d\Psi}(\langle k^\alpha \rangle-\sum_k k^\alpha P(k)e^{\frac{-\lambda (1-g)k^{1+\beta}}{\langle k^{1+\beta}\rangle}\Psi})|_{\Psi=0} >1 \label{7-6}
\end{equation}
Therefore, rumor spreading threshold in the case of random inoculation is obtained as,
\begin{equation}
\hat{\lambda_c}= \frac{\langle k^{\beta+1}\rangle}{(\langle k^{\alpha+\beta+1}\rangle)(1-g)} \label{7-7}
\end{equation}
The relation between rumor spreading threshold, with inoculation ($\hat{\lambda_c}$) and without inoculation($\lambda_c$) can be defined as,
\begin{equation}
\hat{\lambda_c}=\frac{\lambda_c}{1-g} \label{7-8}
\end{equation}
It is to note that by applying random inoculation, the rumor spreading threshold ($ \hat{\lambda_c}$) can be increased as seeen in Eq. \eqref{7-8} (i.e., $ \hat{\lambda_c} >\lambda_c$). 
 
 \section{Targeted inoculation}
  Scale free networks permit efficient strategies and depend upon the hierarchy of nodes. It has been shown that SF networks shows robustness against random inoculation. It shows that the high fraction of inoculation of nodes can be resisted without loosing its global connectivity. But on the other hand SF networks are strongly affected by targeted inoculation of nodes. The SF network suffers an interesting reduction of its robustness to carry information. In targeted inoculation, the high degree nodes have been inoculated progressively, i.e more likely to spread the information. In SF networks, the robustness of the network decreases at the affect of a tiny fraction of inoculated individuals.
  
   Let us assume that fraction $ g_k $ of nodes with degree k are successfully inoculated. An upper threshold of degree $k_t$ , such that all nodes with degree $k>k_t$ get inoculated.  Fraction $ g_k $ of nodes with the degree \textit{k} are successfully inoculated. The fraction of inoculated nodes  given by,
  
   \begin{equation}
\hspace*{0.8cm}g_k\ =\ %\left \{
  \begin{cases}%{array}{cc}
   1, &  \ k>k_t,\\
   f, &  \ k=k_t, \\
   0,& \ k<k_t.
    \end{cases}\label{8-1}
    \end{equation}
    where $ 0<f\leq1 $, and $\sum_k g_kP(k)=\bar{g}$, where $ \bar{g} $ is the average inoculation fraction. Therefore, now rumor spreading equation is defined for targeted inoculation as,
    \begin{eqnarray}
 &&\frac{d\rho^i(k,t)}{dt}=-\frac{\lambda(1-g_k) k^{1+\beta}}{\langle k^{1+\beta}\rangle}\rho^i(k,t)\Phi(t);\\\label{8-2}
&&\frac{d\rho^s(k,t)}{dt}=-\rho^s(k,t) + \frac{\lambda(1-g_k) k^{1+\beta}}{\langle k^{1+\beta}\rangle}\rho^i(k,t)\Phi(t);\\\label{8-3}
&&\frac{d\rho^r(k,t)}{dt}=\rho^s(k,t).\label{8-4}
\end{eqnarray}
Further, $ \Psi $ for targeted inoculation,
\begin{eqnarray}
 \Psi(t)&=&\langle k^\alpha \rangle-\sum_{k}k^\alpha P(k`)e^{\frac{-\lambda(1-g_k) k^{1+\beta}}{\langle k^{1+\beta}\rangle}\Psi}\label{8-5}
 \end{eqnarray}
  \begin{equation}
\frac{d}{d\Psi}(\langle k^\alpha \rangle-\sum_k k^\alpha P(k)e^{\frac{-\lambda (1-g_k)k^{1+\beta}}{\langle k^{1+\beta}\rangle}\Psi})|_{\Psi=0} >1 \label{8-6}
\end{equation} 
Therefore, rumor spreading threshold in the case of targeted inoculation is obtained as,
\begin{equation}
\tilde{\lambda_c}= \frac{\langle k^{\beta+1}\rangle}{\langle k^{\alpha+\beta+1}\rangle-\langle g_k k^{\alpha+\beta+1}\rangle}\label{8-7}
\end{equation}
Here, $ \langle g_kk^{\alpha+\beta+1} \rangle $=$ \bar{g}\langle k^{\alpha+\beta+1} \rangle+\eta \prime $, where $ \eta \prime=\langle {(g_k-\bar{g})[\langle k^{\alpha+\beta+1} -\langle k^{\alpha+\beta+1} \rangle}]\rangle $ is the covariance of $ g_k $ and $k^{\alpha+\beta+1} $. The cut-off degree $ k_c $ is large enough where $ \eta \prime<0 $, but for small $ k_c $, $ g_k-\bar{g} $ and $k^{\alpha+\beta+1}- \langle k^{\alpha+\beta+1}\rangle $ have the same signs except for k's where $ g_k-\bar{g} $ and/or $ k^{\alpha+\beta+1} -\langle k^{\alpha+\beta+1} \rangle $ is 0. 

Therefore $ \eta \prime>0 $ for appropriate $ k_c$,
\begin{equation}
  \tilde{\lambda_c}>\frac{1-g}{1-\bar g}\hat{\lambda_c} \label{8-8}
\end{equation} 
 If fractions of inoculation in random and targeted are same then $ g=\bar{g} $,
 \begin{equation}
 \tilde{\lambda_c}>\hat{\lambda_c} \label{8-9}
 \end{equation}
 The above relation shows that in scale free networks targeted inoculation is more effective than the random inoculation.
 
\section{Numerical simulations: results and discussion } 
The studies of uncorrelated networks have been performed using the degree distribution of scale free network. The size of the network is considered to be $ N=10^5 $ and the degree exponent ($ \gamma $)=2.4. At the starting of rumor spreading, the spreaders are randomly chosen. In Fig. \ref{R_lam1}, the  final size of rumor \textit{R} is plotted against rumor transmission rate for \textit{N}=100000, 1000 and 100 by tuning $ \alpha $ and $ \beta $ as,

\begin{itemize}
\item$ \alpha+\beta =0 $: Finite rumor threshold has been found. It has been observed that for different size of networks, constant threshold is there (after fixing the value of $ \alpha $ and $ \beta $). For the case $ \alpha+\beta+2<\gamma $, since $ \gamma=2.4 $. Therefore here it is interesting to see that finite threshold has been found which is independent from the size of network same as obtained from  Eqs.\eqref{6-15} and \eqref{6-16}.
\item $ \alpha+\beta=-1: $ The simulation results are found same as above since $ \alpha+\beta+2<\gamma $ with finite threshold and constant for any network size (for fix values of $ \alpha $ and $ \beta $).
\item $ \alpha+\beta=1 $: The results show some rumor threshold but approaches to 0 as network size increases. For this case $ \alpha+\beta+2>\gamma $, since $ \gamma=2.4 $. Therefore, the threshold approaches to 0 as network size increases. Similar results have been obtained by Eqs.\eqref{6-15}-\eqref{6-16}.
\item $ \alpha+\beta=2$: The simulation results are found same as above since $ \alpha+\beta+2>\gamma $ with  threshold approaches to 0 as size of network increases.
\end{itemize}

Using simulation results, of final size of rumor (\textit{R}), is plotted against time (\textit{t}) in Fig. \ref{R_t}. It has been observed that rumor size increases exponentially as time increases and after some time it approaches a steady state,  that will remain constant, since spreader density is 0 at that time. It has also been observed that when $ \alpha+\beta $ is low, then rumor size increases slowly initially but when $ \alpha+\beta $ increases rumor size increases rapidly against time. While tuning  $ \alpha+\beta=-1 $ to  $ \alpha+\beta =2$ rumor increments faster initially (Fig. \ref{R_t}). When ratio of $ \alpha $ and $ \beta $ is high than the rumor size is also high. This result justifies that the $ \alpha $ affects more final rumor size \textit{R} than $ \beta $It is seen from Eq. \eqref{6-5} that  when $ \alpha $ is very small (0.1. 0.3) then the rumor threshold will be high. Than it is seen the final size of rumor will be too small when rumor transmission rate ($ \lambda $) is less than the 
rumor threshold ($ \lambda_c $).
\begin{figure}[ht]
\begin{center}
$\begin{array}{cc}
\includegraphics[width=2.4in, height=2 in]{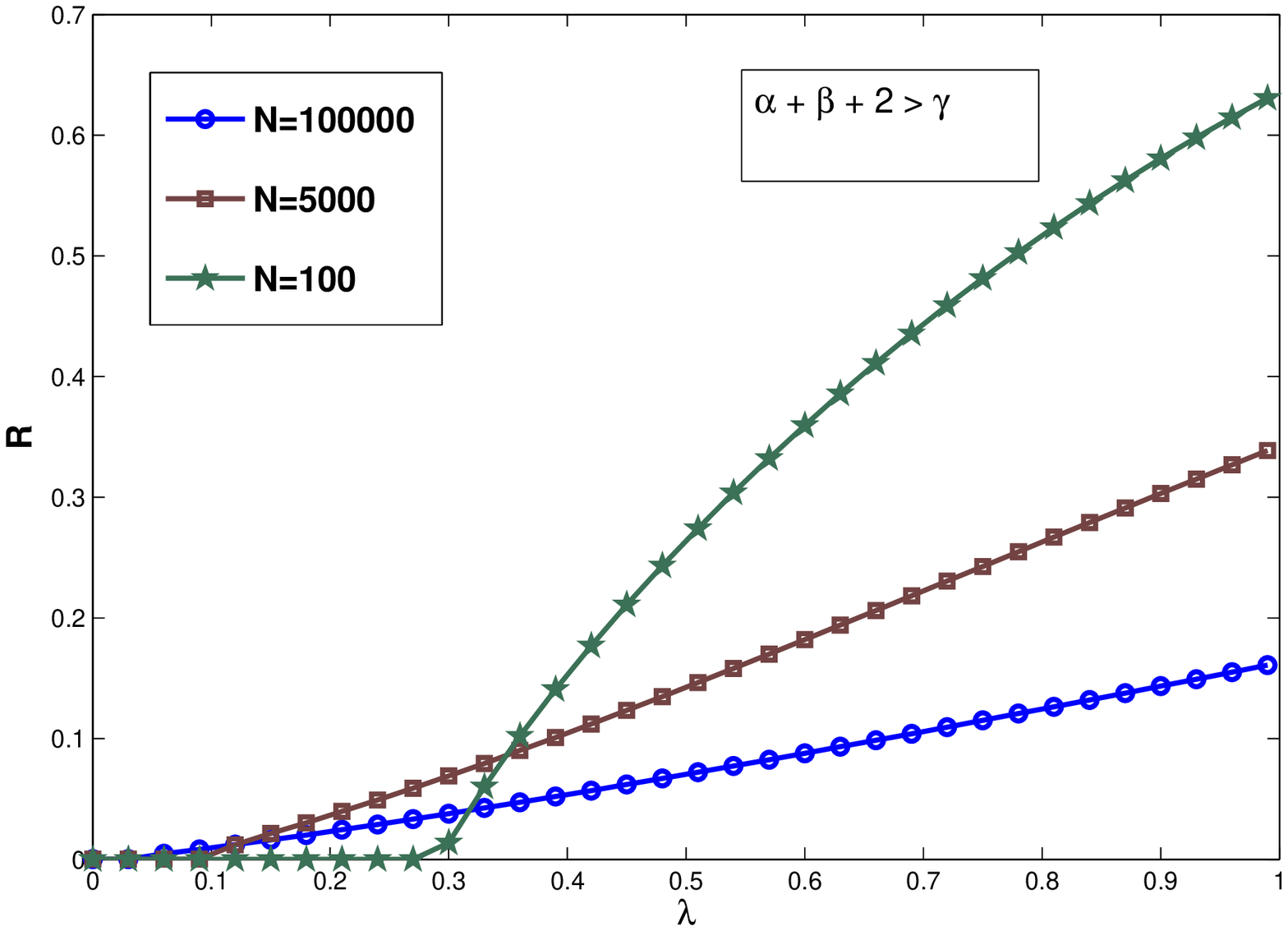} &
\includegraphics[width=2.4in, height=2 in]{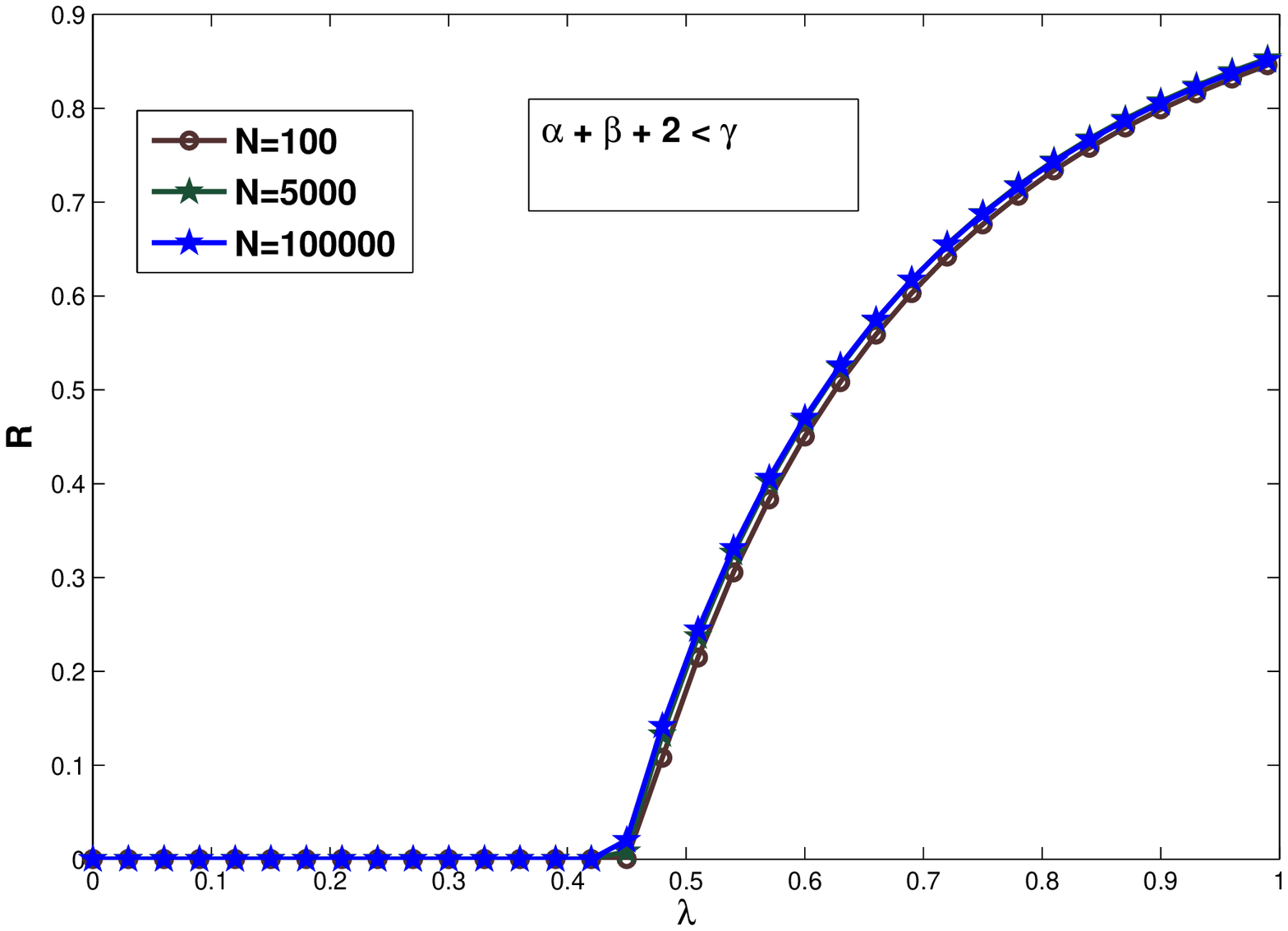}\\ 
\mbox{\textbf{(a)} R vs $\lambda$ for $ \alpha+\beta+2>\gamma $ } & \mbox{\textbf{(b)} R vs $\beta$ for $ \alpha+\beta+2<\gamma $} \\
\end{array}$
\end{center}
\caption{ \textit{R} vs $\lambda$ with $ \alpha+\beta+2>\gamma $ and $ \alpha+\beta+2<\gamma $ for different size of scale free networks } \label{R_lam1}
\end{figure}

\begin{figure}
\begin{center}
$\begin{array}{cc}
\includegraphics[width=2.7in, height=1.8 in]{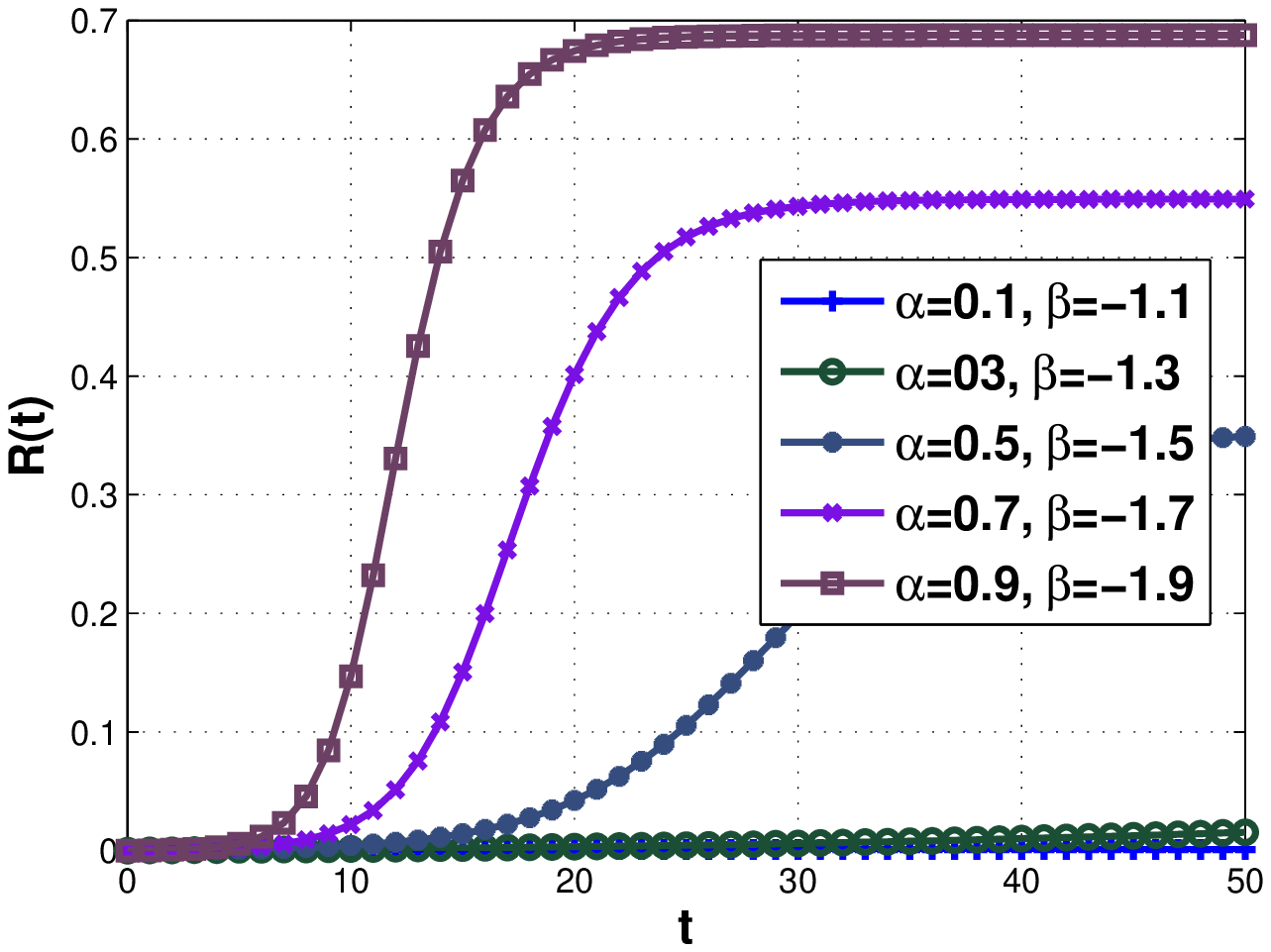} &
\includegraphics[width=2.7in, height=1.8 in]{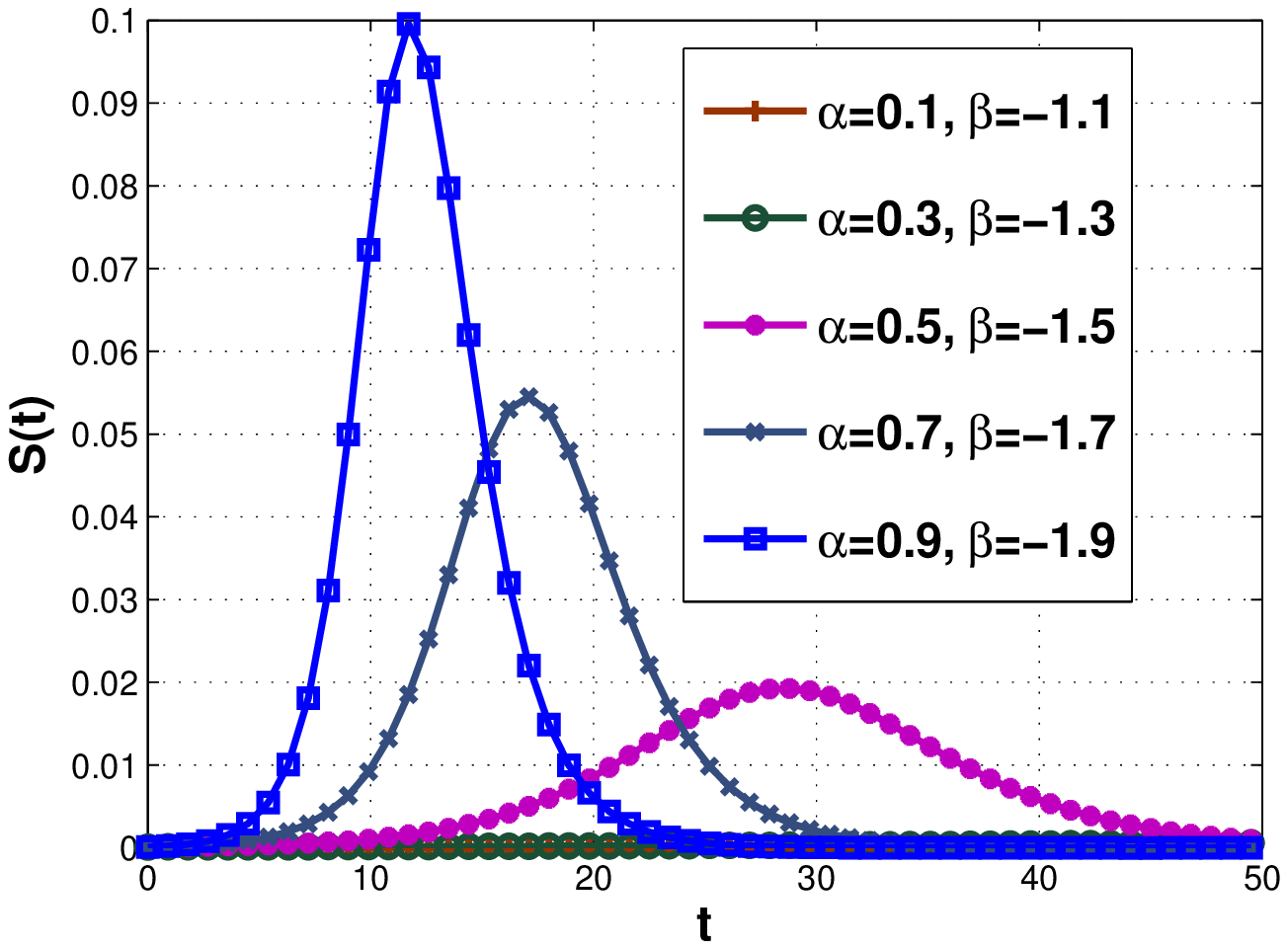}\\ 
\mbox{\textbf{(a)} $\alpha+\beta =-1$} & \mbox{\textbf{(e)} $\alpha+\beta =-1$}\\
\includegraphics[width=2.7in, height=1.8 in]{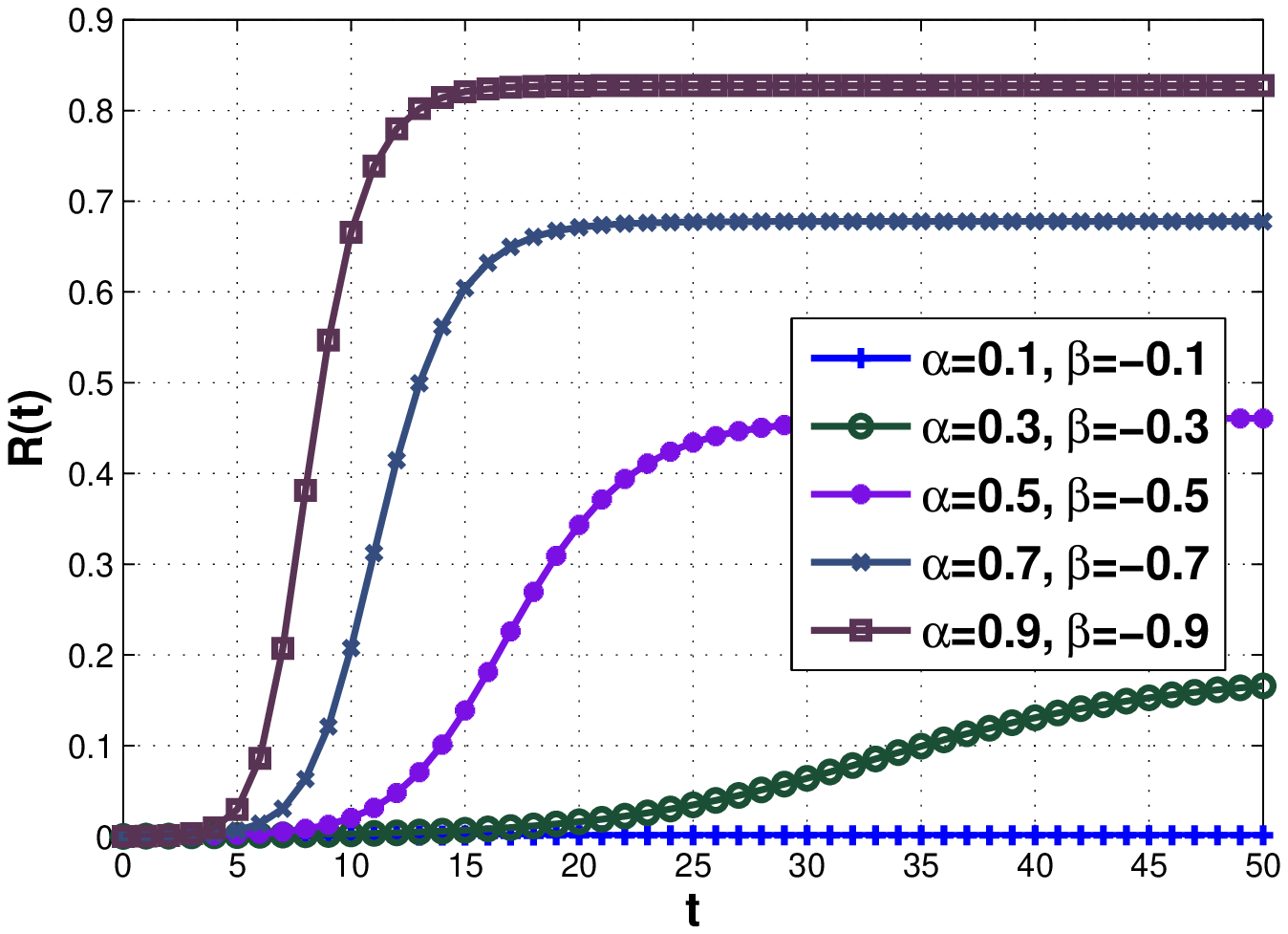} &
\includegraphics[width=2.7in, height=1.8 in]{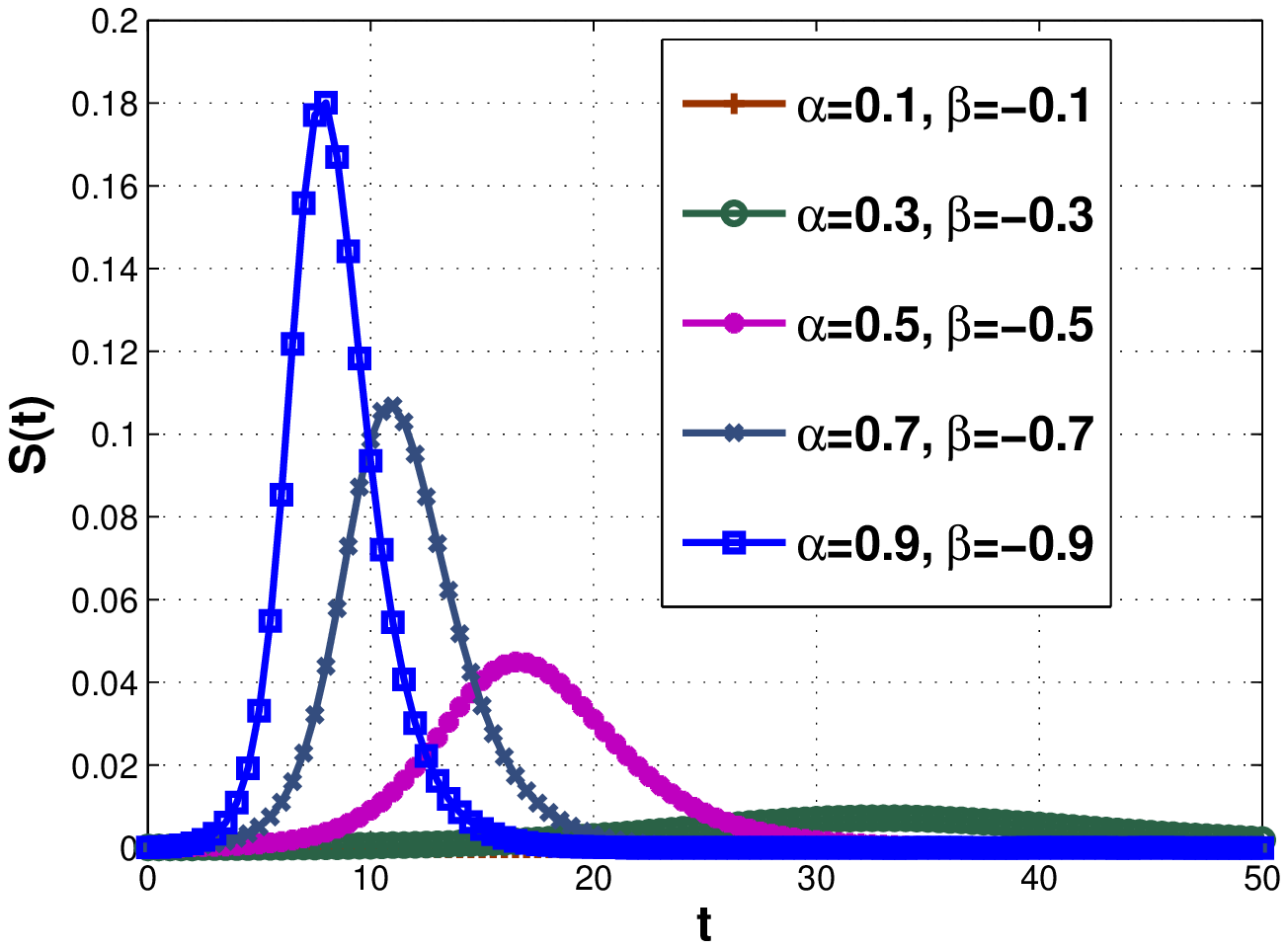} \\
\mbox{\textbf{(b)} $\alpha+\beta =0$} & \mbox{\textbf{(f)} $\alpha+\beta =0$}\\
\includegraphics[width=2.7in, height=1.8 in]{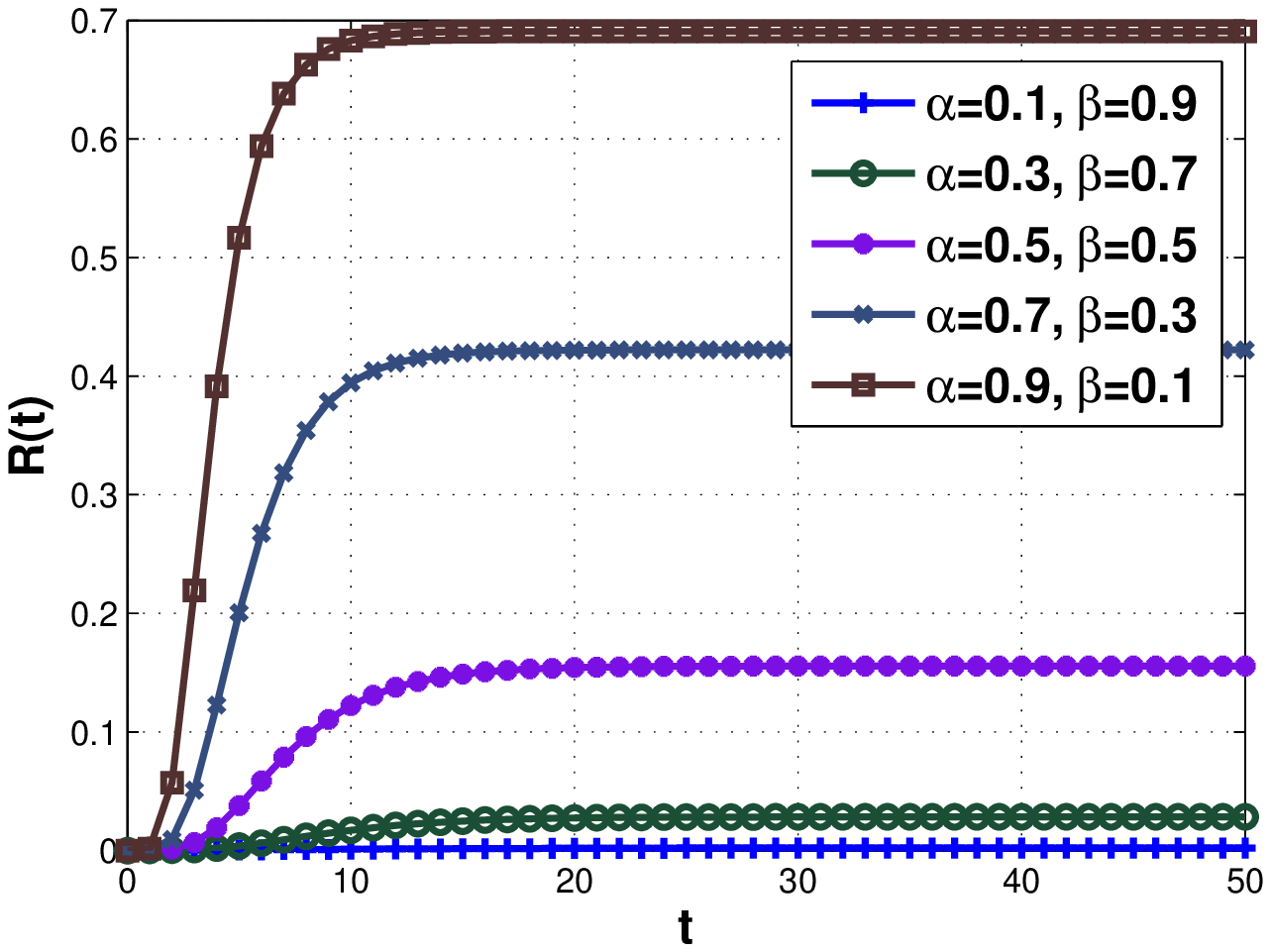} &
\includegraphics[width=2.7in, height=1.8 in]{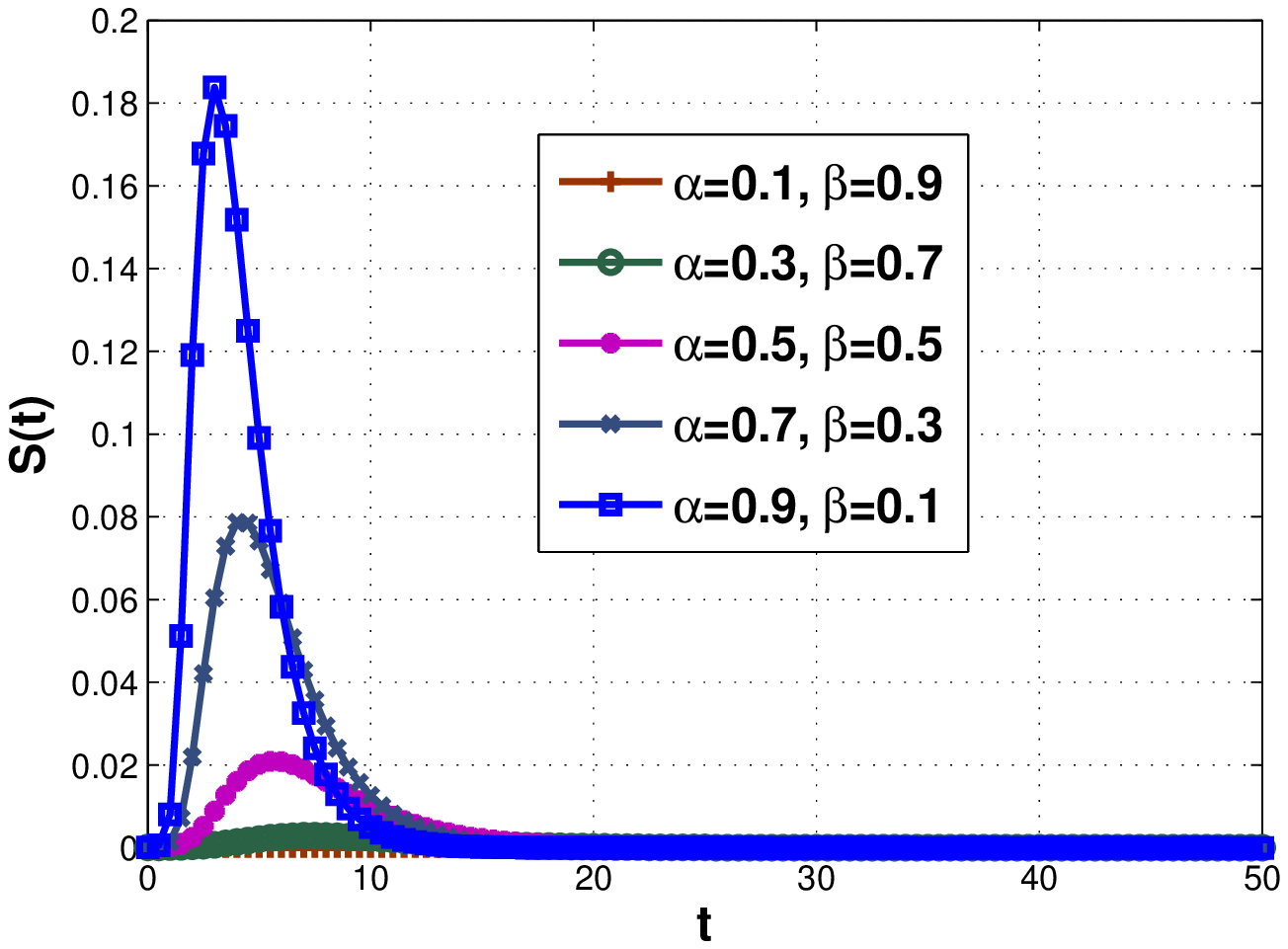} \\
\mbox{\textbf{(c)} $\alpha+\beta =1$} & \mbox{\textbf{(g)} $\alpha+\beta =1$}\\
\includegraphics[width=2.7in, height=1.8 in]{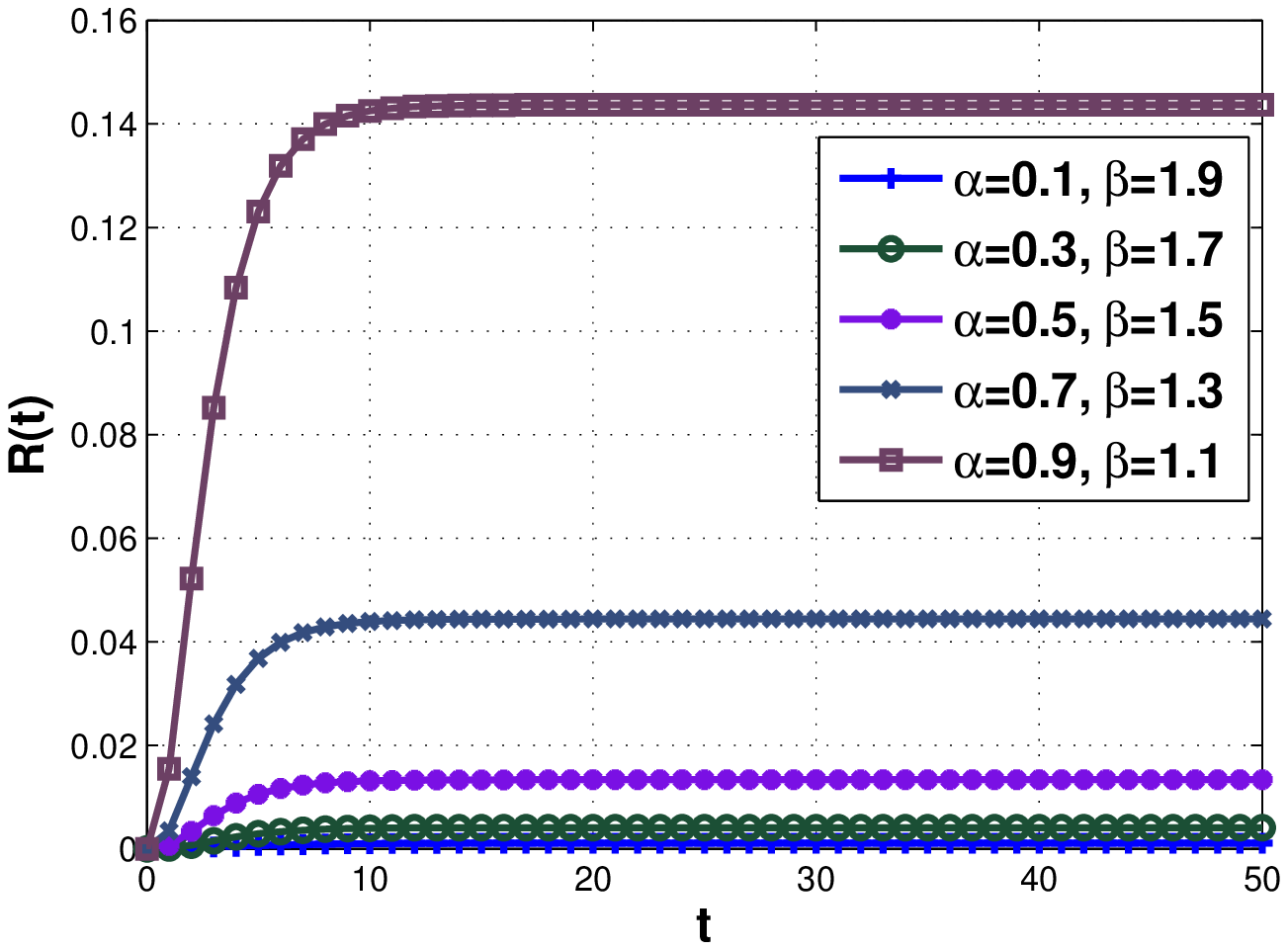} &
\includegraphics[width=2.7in, height=1.8 in]{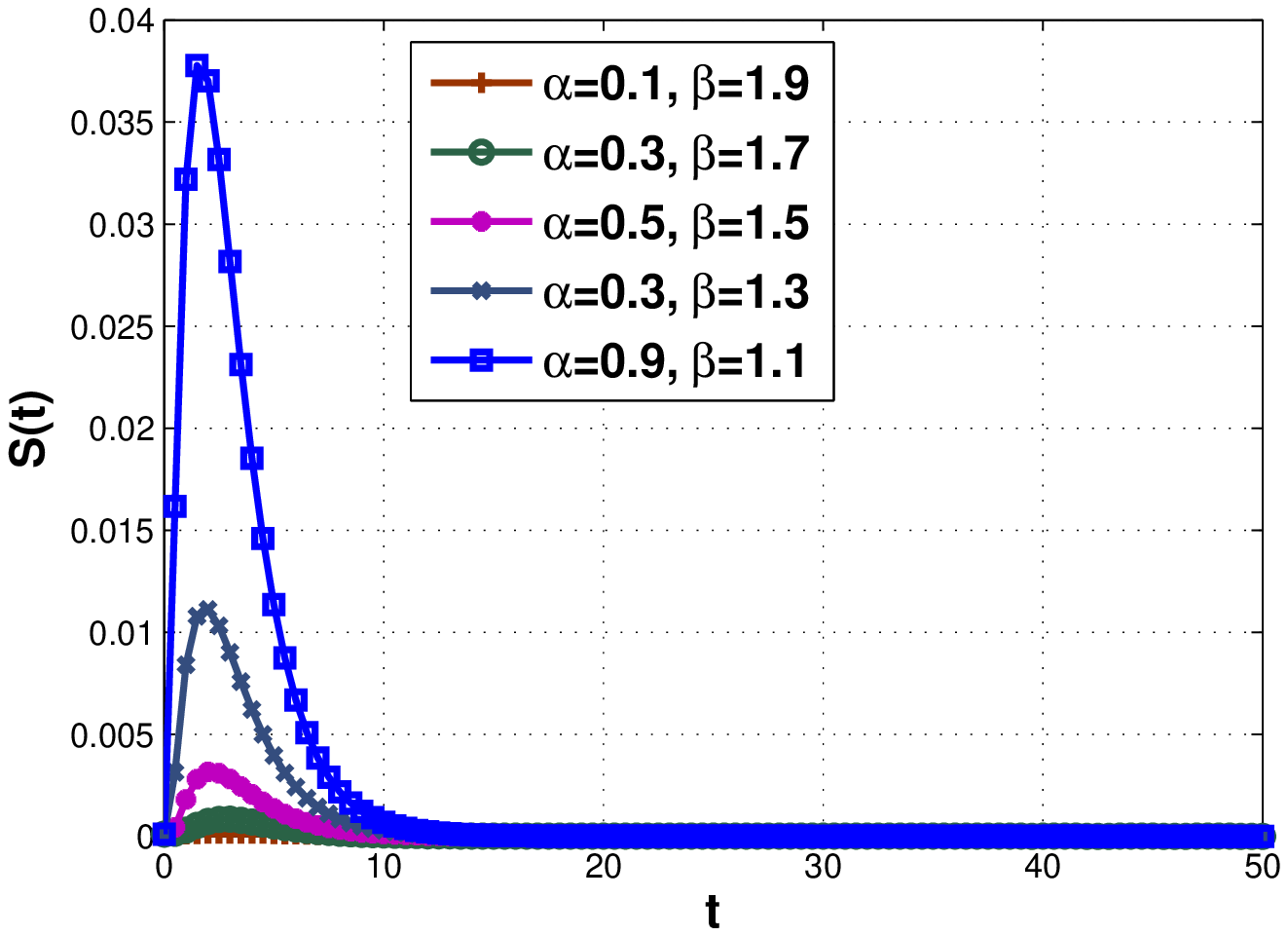} \\
\mbox{\textbf{(d)} $\alpha+\beta =2$} & \mbox{\textbf{(h)} $\alpha+\beta =2$}
\end{array}$
\end{center}
\caption{\textit{R(t)} vs \textit{t} (a-d) and \textit{S(t}) vs \textit{t} (e-h) for $ \lambda=0.8 $ and different combinations of $ \alpha $ and $ \beta $ parameters} \label{R_t}
\end{figure}

\begin{figure}[h]
\begin{center}
$\begin{array}{cc}
\includegraphics[width=2.4in, height=2 in]{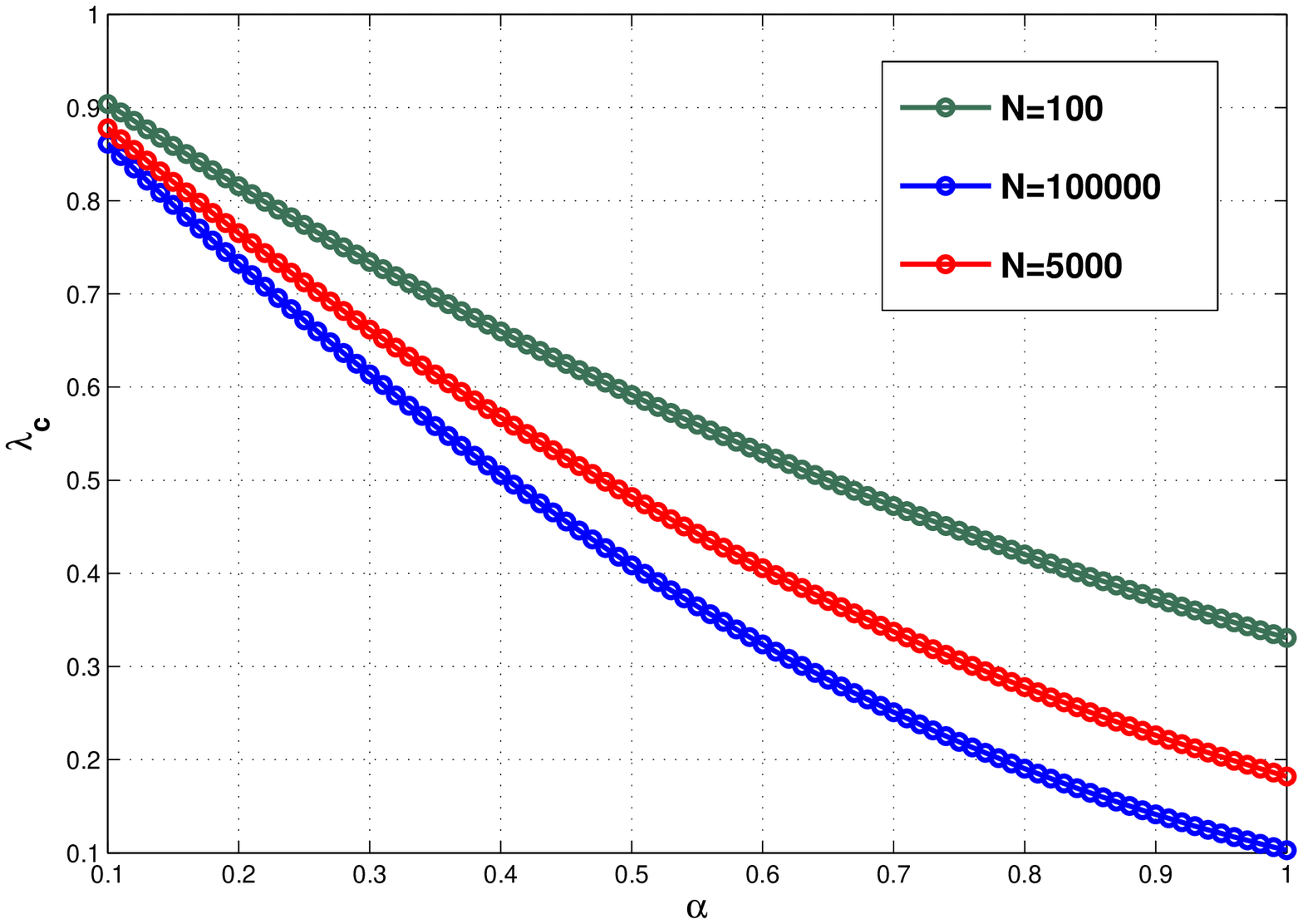} &
\includegraphics[width=2.4in, height=2 in]{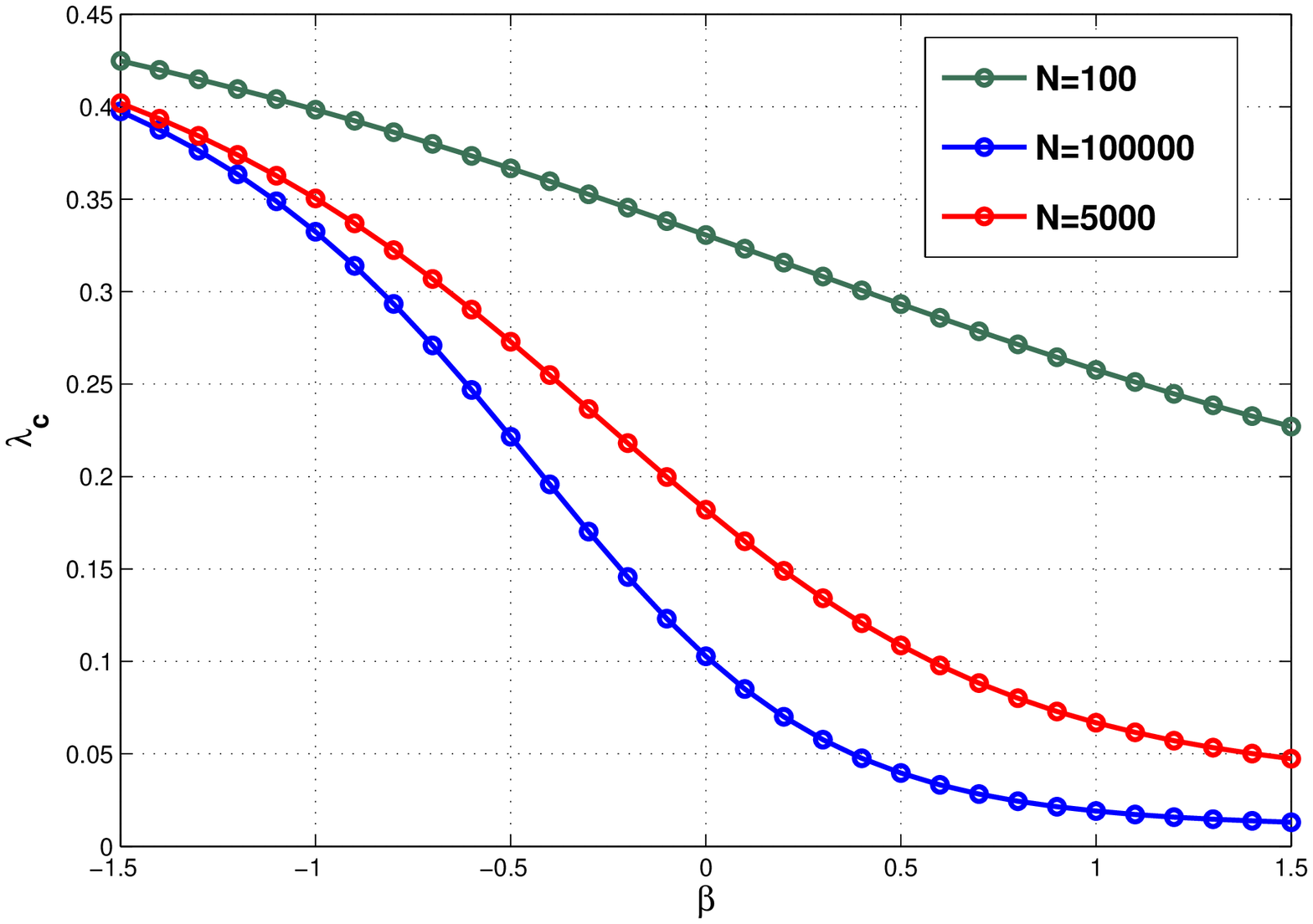}\\ 
\mbox{\textbf{(a)} $\lambda_c$ vs $\alpha$} & \mbox{\textbf{(b)} $\lambda_c $vs $ \beta$}\\
\end{array}$
\end{center}
\caption{Threshold ($ \lambda_c $) vs $ \alpha $ and $\beta  $ for $ \gamma=2.4 $ for different size of scale free networks} \label{Crc_thr}
\end{figure}
 Critical rumor threshold is plotted against $ \alpha $ in Fig. \ref{Crc_thr} while considering $ \beta=0 $. Here, $ \lambda_c $ decreased exponentially with the increase of $ \alpha $. It is maximum for $\alpha=0.1  $ and almost 0  at $ \alpha=1 $ for $ N=100000 $. Interestingly, this also happens in real  life situation, when an informed node pass information to its maximum number of neighbors then rumor spreading will get outbreak in the network. However, the outbreak is hard to achieve when it passes rumor to less number of neighbors ($\simeq$ 10-30\%). Similarly, $ \lambda_c $ has been studied against $ \beta $ at $ \alpha=1 $ in Fig. \ref{Crc_thr}. It is found that $ \beta $ affects less rumor threshold for entire range except when $ \beta>0 $. Further it  approaches to 0. When size of the network ($ N $) increases then rumor threshold is decreased as shown in Fig. \ref{Crc_thr}(a) and Fig. \ref{Crc_thr}(b).
 \begin{figure}[h]
\begin{center}
$\begin{array}{cc}
\includegraphics[width=2.5in, height=2in]{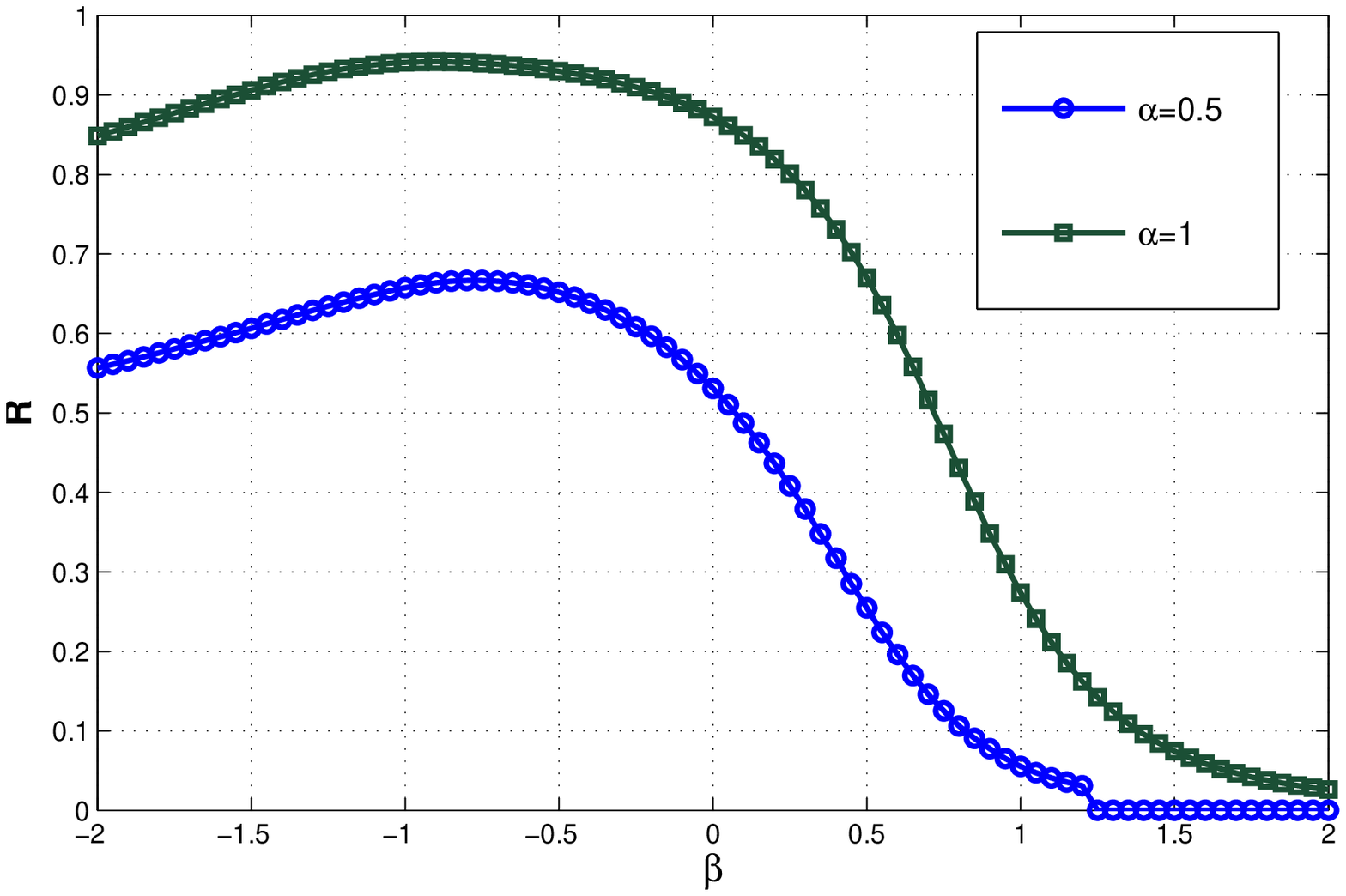} &
\includegraphics[width=2.7in, height=2.15in]{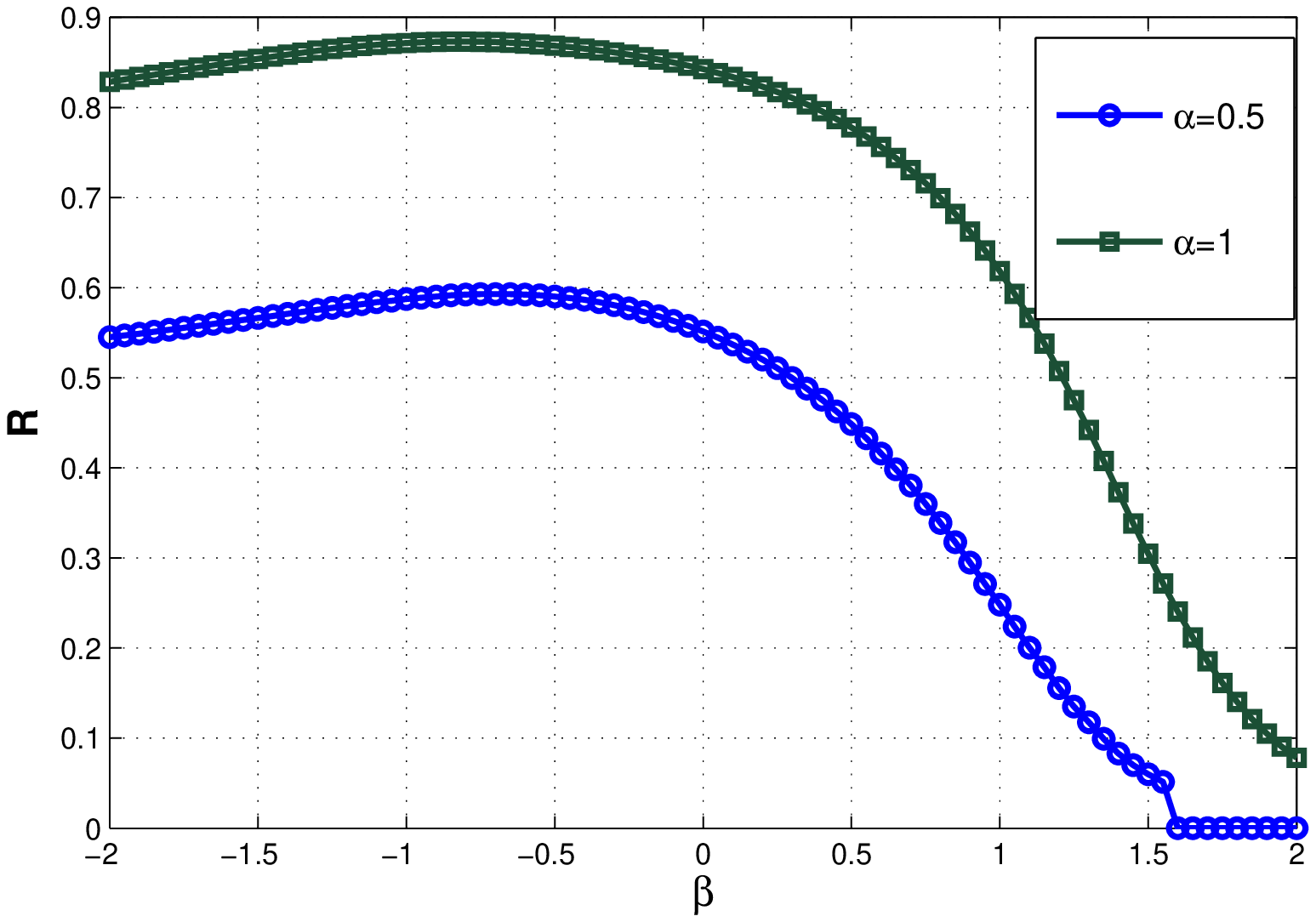}\\ 
\mbox{\textbf{(a)} $ \gamma=2.4 $} & \mbox{\textbf{(b)} $ \gamma=3 $}
\end{array}$
\end{center}
\caption{Final size of rumor \textit{R} vs $\beta$ ($ N=10^5$ nodes, $\alpha=0.5 $ and $ \alpha=1 $) for different values of $ \gamma $}\label{R_bet}
\end{figure}

\begin{figure}[h]
\begin{center}
\includegraphics[width=2.5in, height=2in]{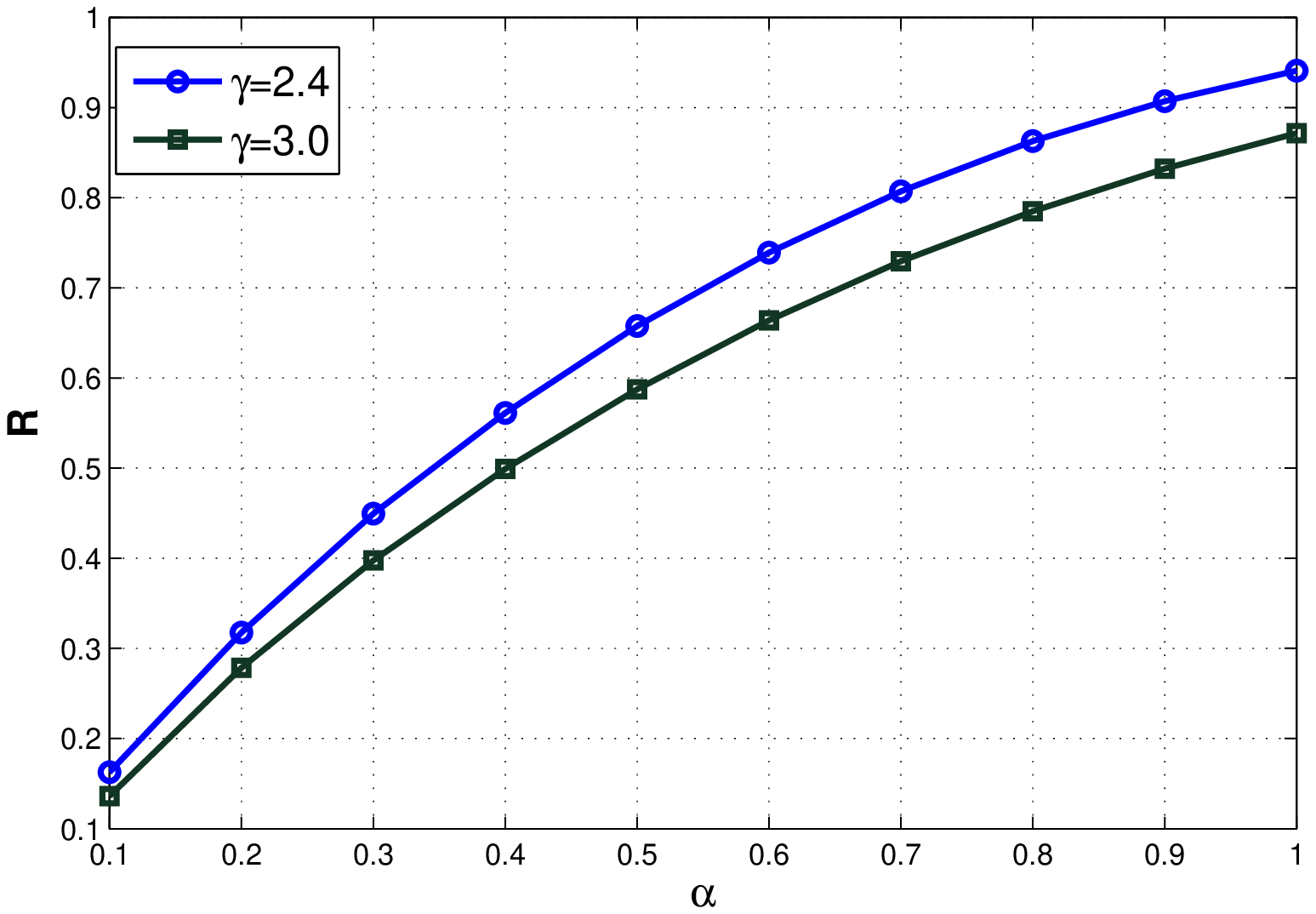} \\
\end{center}
\caption{Final size of rumor \textit{R} vs $\alpha$ ($ N=10^5$ nodes, $\beta=-1 $ and $ \lambda=1 $) for $ \gamma =2.4, 3$} \label{R_alp}
\end{figure}

In Fig. \ref{R_bet}, final size of rumor has been plotted against $ \beta $ for $ \alpha=1$, $ \gamma=2.4 $, 3, and $ \lambda=1 $. It is seen that that \textit{R} is maximum when $ \beta=-1 $. Initially rumor size \textit{R } increases with $ \beta $ but after achieving a maximum value for $ \beta=-1 $ it decays exponentially. Further, for $ \alpha=1 $, $ \gamma=2.4 $ final rumor size \textit{R} approaches to 0 (beyond $ \beta=1.5$). Furthermore it is interesting to note that, final rumor size \textit{R} increased with increase of $ \alpha $ same as obtained by Eq. \eqref{6-5} (Fig. \ref{R_alp}). For random inoculation $ g= 0.1, 0.3, 0.5, 0.7, 0.9 $, the final rumor size \textit{R} has been plotted against $ \beta $ (Fig. \ref{R_beta_rand}). It is observed that to get maximum value of \textit{R}, $ \beta $ increases when \textit{g} increases. Also, maximum size of rumor decreases with increase of \textit{g}. It is because in random inoculation rumor threshold value is larger than the threshold in model 
without inoculation as inferred from from Eq. \eqref{7-8} ($ \hat{\lambda_c}>\lambda_c $). The sharp decrease in the value of \textit{R} is seen when rumor transmission rate ($ \lambda $) is decrease by 0.5 in comparison to the case where decrease of \textit{R} is shallow when $ \alpha $ is decrease by 0.5. Since for $ \lambda \simeq 1 $ there may be a chance that rumor will spread to some extent at any value of $\textit{g} < 1,$ for very large \textit{N}. Similar results have been observed in the case of targeted inoculation in Fig. \ref{R_beta_targ}. But here maximum rumor size is much smaller with the inoculation of very less fraction of nodes (e.g. for \textit{g}=0.25 the final rumor size \textit{R} is almost suppressed), since rumor threshold in targeted inoculation is larger than the random inoculation. For random inoculation strategy, the rumor spreading is plotted against time evolution using modified model through simulation results. It is found in Fig. \ref{Rrt1}- Fig. \ref{Rrt4}, for $ g= 0.1, 0.3,
 0.5, 0.7 $ that, if $ \alpha+\beta $ increases from -1 to 2, \textit{R} will increase since rumor threshold decreases. Further it can be observed from Eq.\eqref{6-16} that, for $  \alpha+\beta=-1 $ and 0, $ \lambda_c $ is finite and higher than the case where $  \alpha+\beta =1$ and 2. Therefore, \textit{R(t)} is almost 0 and grows slowly with time when, $  \alpha+\beta=-1 $. The growth in \textit{R(t)} is higher for lower values of \textit{g}, but the case is reversed for the higher values of \textit{g}. Similarly in the case of targeted inoculation scheme using lower values of $g= 0.05, 0.1, 0.15, 0.2$ for $ \alpha+\beta=-1 $ to 2, rumor threshold found more than the random inoculation scheme (Eq. \eqref{8-9}) and rumor spreading is supressed for inoculation less number of nodes than the random inoculation scheme (Fig. \ref{Rtt1}- Fig. \ref{Rtt4}).
 \begin{figure}[h]
\begin{center}
$\begin{array}{ccc}
\includegraphics[width=1.8in, height=1.4 in]{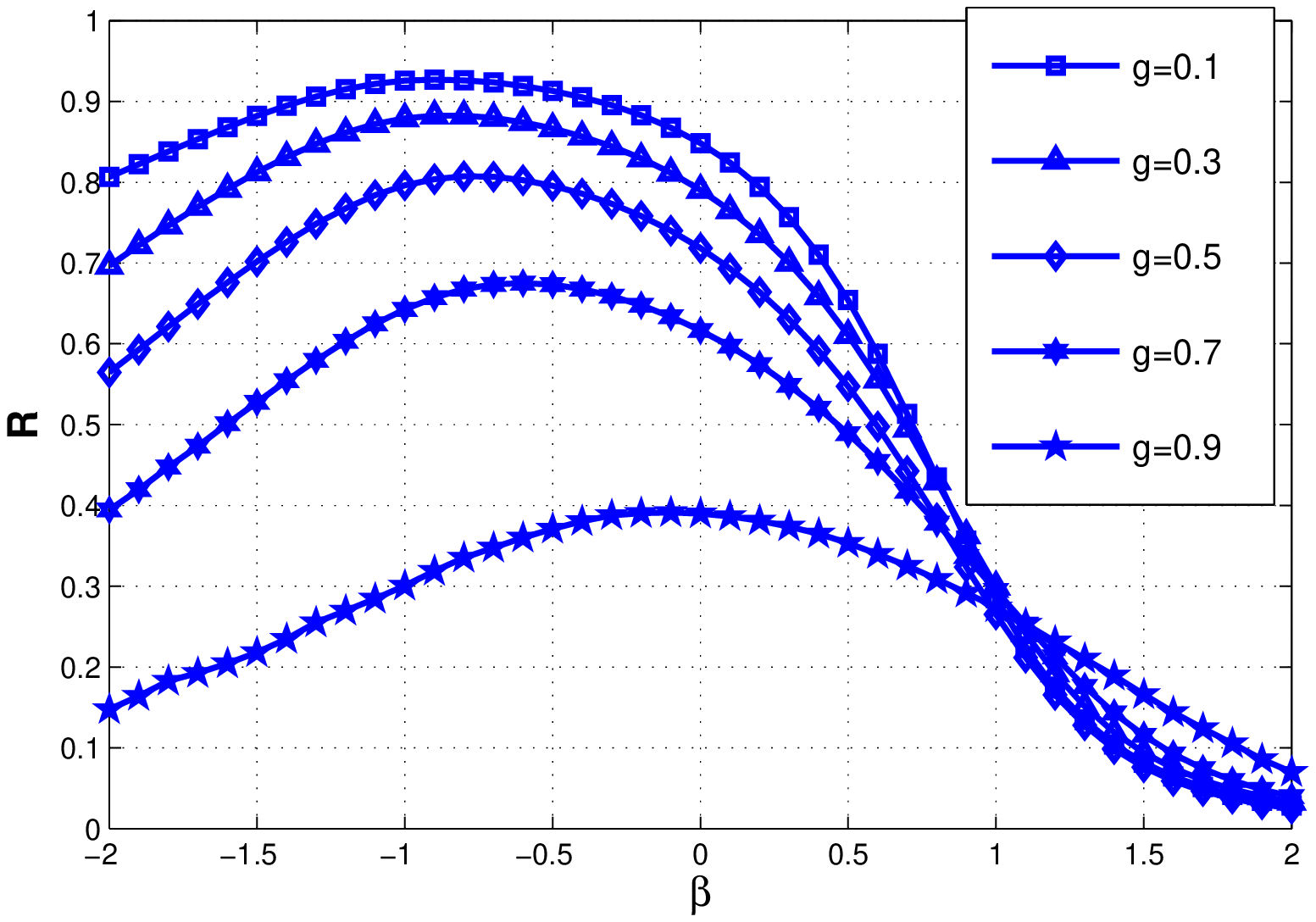} &
\includegraphics[width=1.8in, height=1.4 in]{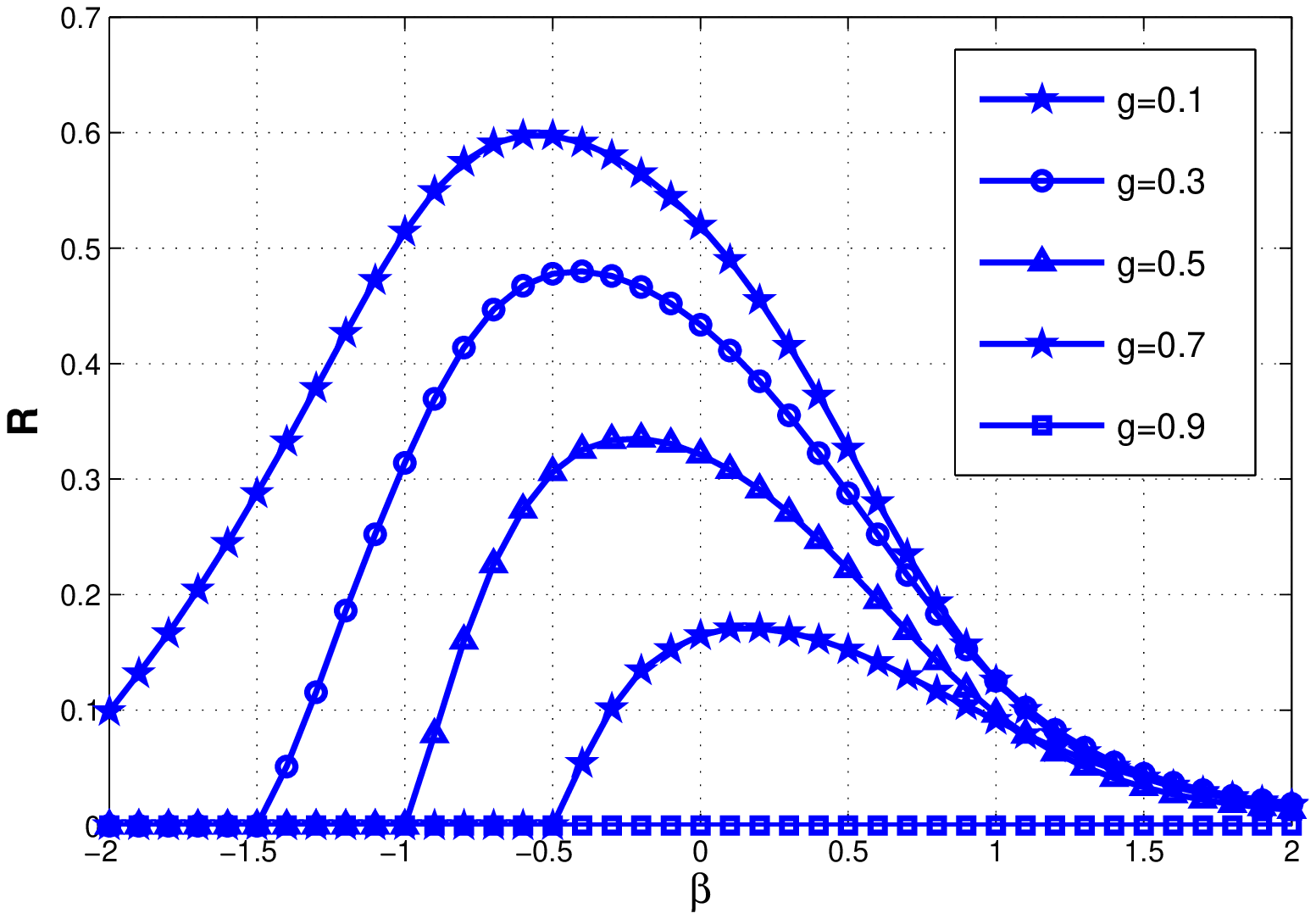} &
\includegraphics[width=1.8in, height=1.4 in]{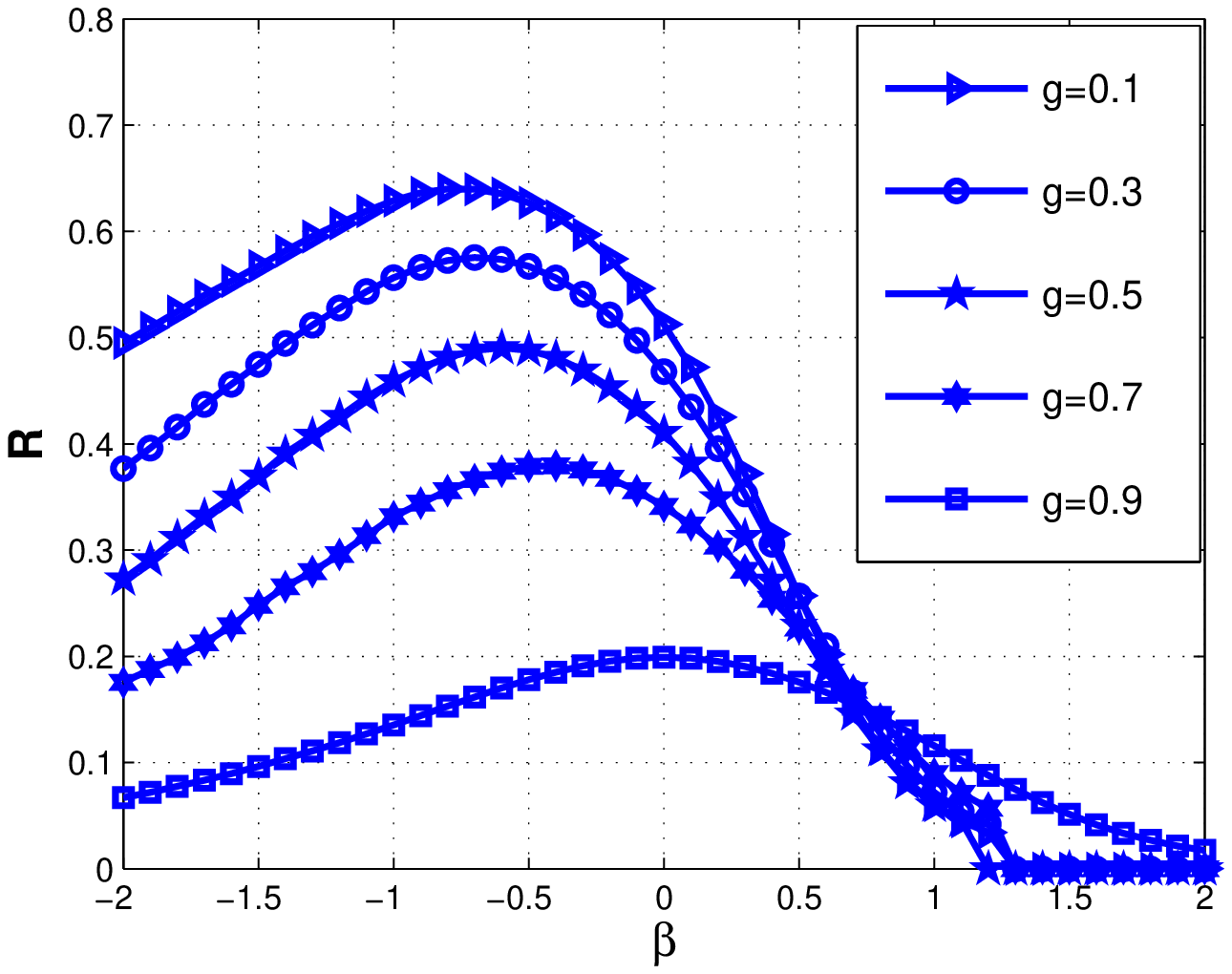}\\
\mbox{\textbf{(a)} $ \alpha=1 $ and $ \lambda=1 $} & \mbox{\textbf{(b)}  $ \alpha=1 $ and $ \lambda=0.5 $} & \mbox{\textbf{(c)}  $ \alpha=0.5 $ and $ \lambda=1 $}\\
\end{array}$
\end{center}
\caption{Final size of rumor \textit{R} vs $ \beta $ in random inoculation scheme for different fraction of inoculation (\textit{g})} \label{R_beta_rand}
\end{figure}

\begin{figure}[h]
\begin{center}
$\begin{array}{ccc}
\includegraphics[width=1.8in, height=1.4 in]{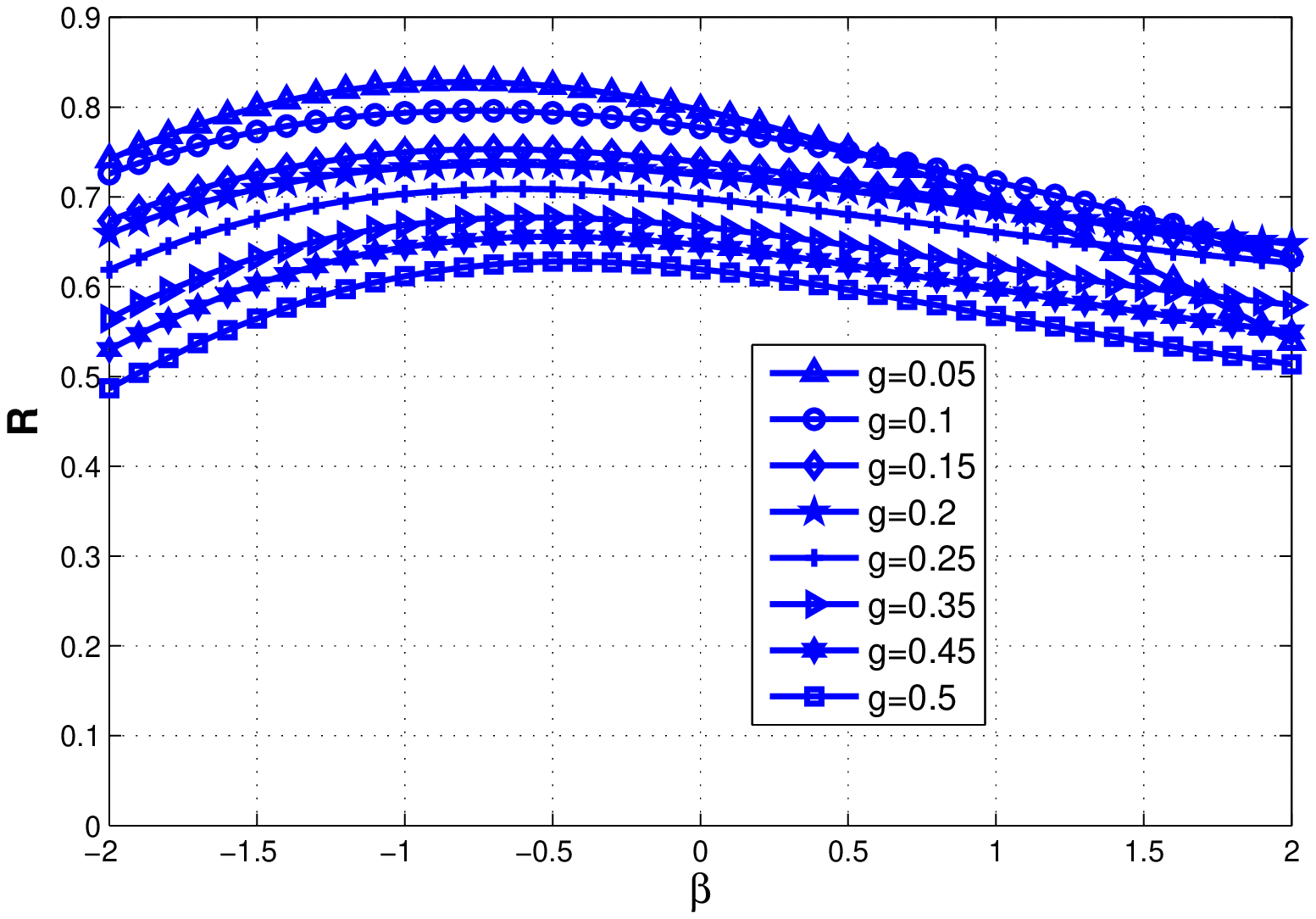} &
\includegraphics[width=1.8in, height=1.4 in]{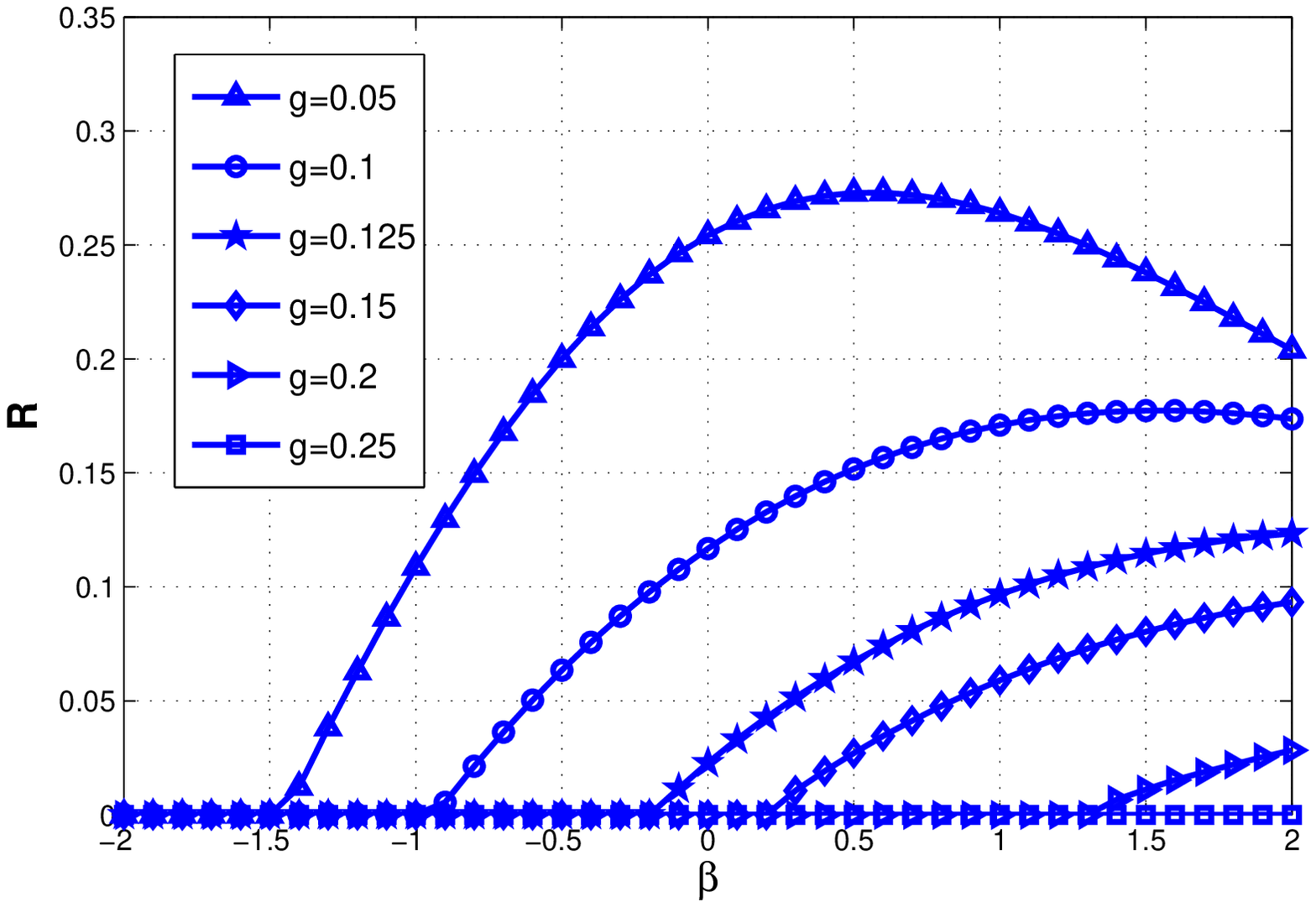} &
\includegraphics[width=1.8in, height=1.4 in]{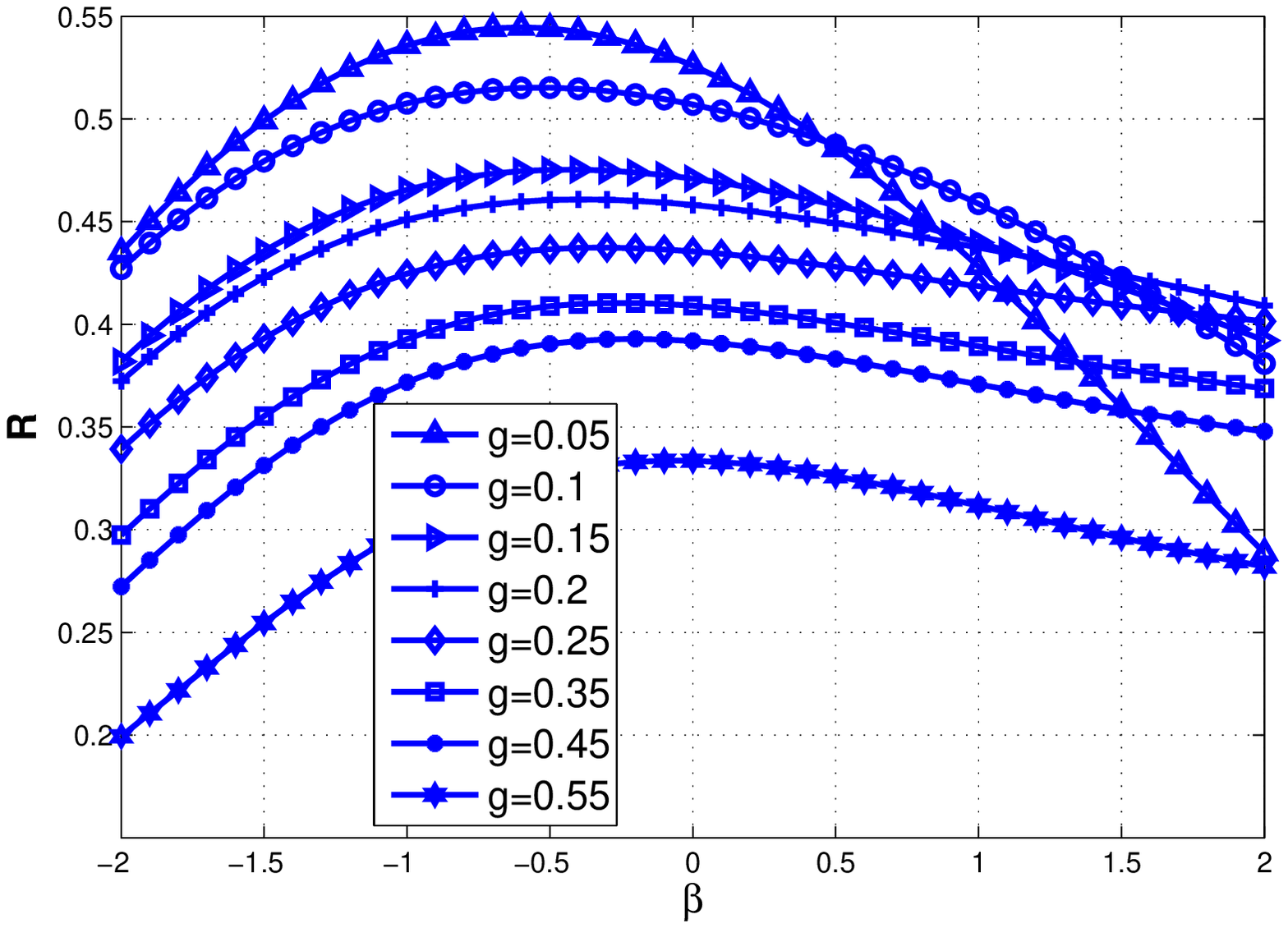}\\
\mbox{\textbf{(a)} $ \alpha=1 $ and $ \lambda=1 $} & \mbox{\textbf{(b)}  $ \alpha=1 $ and $ \lambda=0.5 $} & \mbox{\textbf{(c)}  $ \alpha=0.5 $ and $ \lambda=1 $}\\
\end{array}$
\end{center}
\caption{Final size of rumor \textit{R} vs $ \beta $ in targeted inoculation scheme for different fraction of inoculation (\textit{g})} \label{R_beta_targ}
\end{figure}

 \begin{figure}[h]
\begin{center}
\includegraphics[scale=0.32]{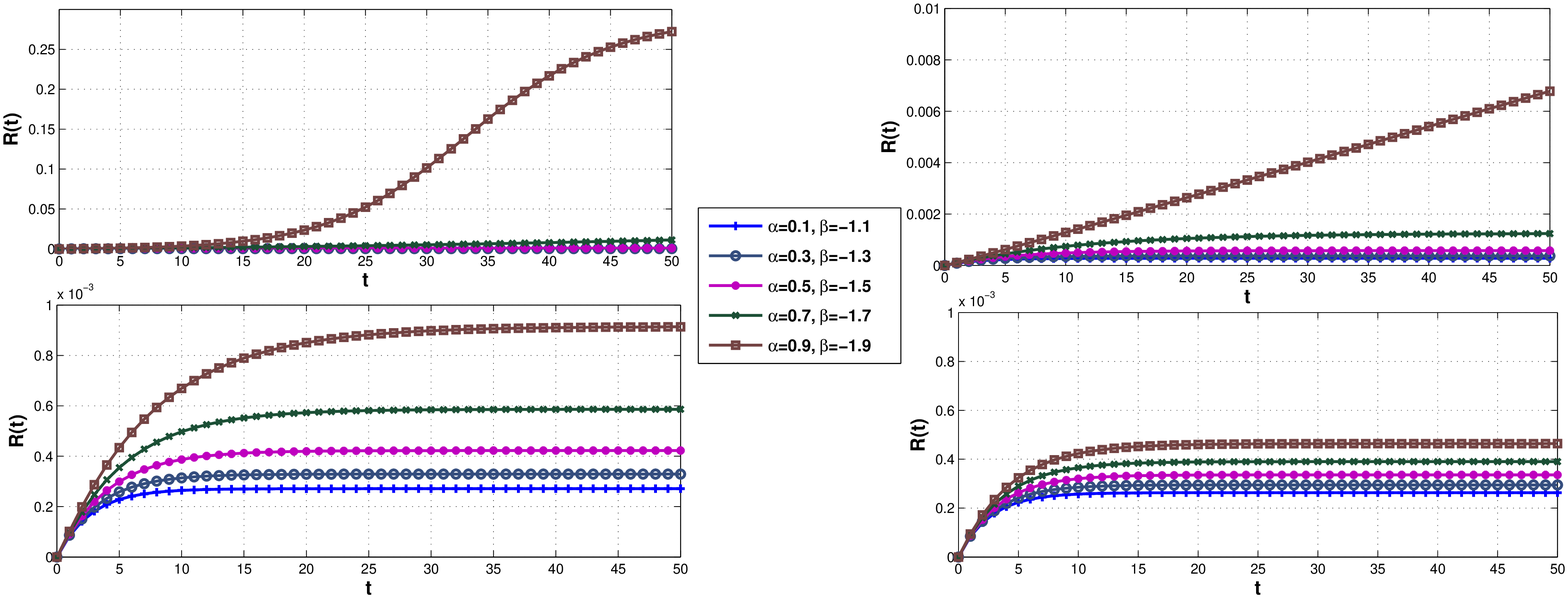}
\end{center}
\caption{ \textit{R(t)} vs \textit{t} with $ \alpha+\beta =-1 $ , $ \lambda=0.6 $ in random inoculation for \textit{g}= 0.1, 0.3 (upper) 0.5, 0.7 (lower)} \label{Rrt1}
\end{figure} 

\begin{figure}[h]
\begin{center}
\includegraphics[scale=0.32]{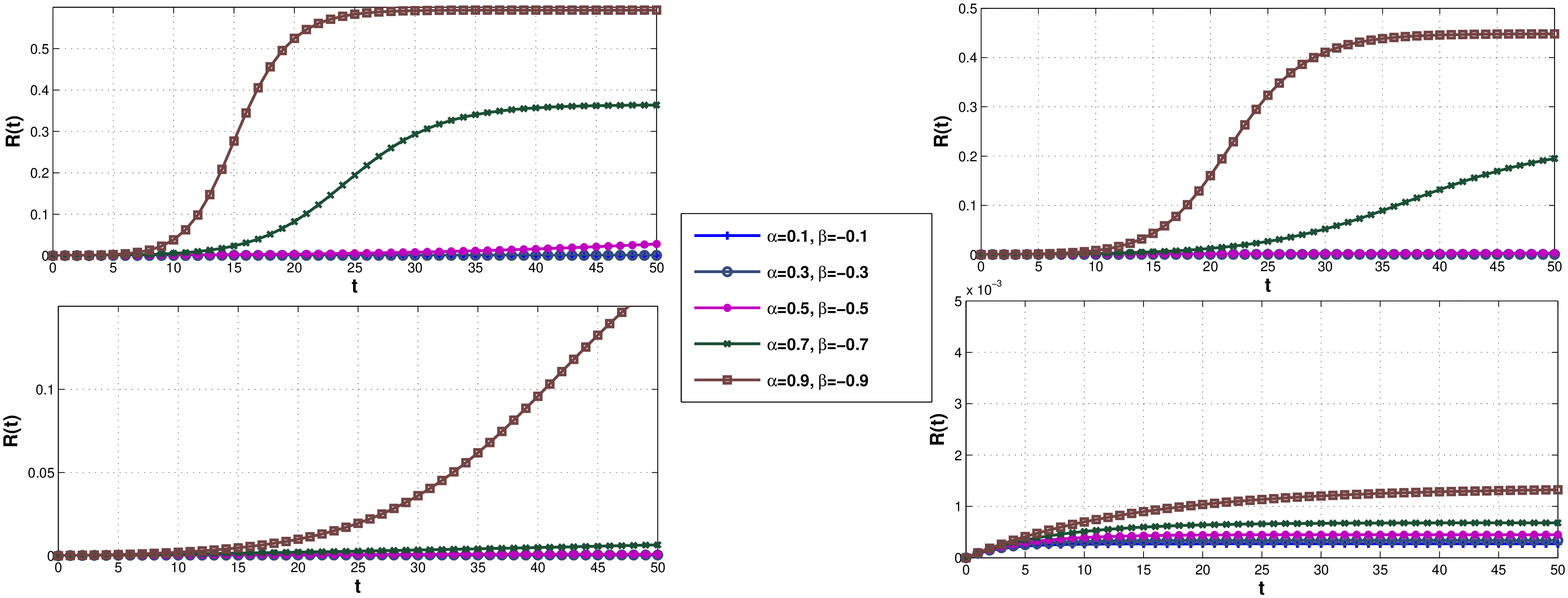}
\end{center}
\caption{ \textit{R(t)} vs \textit{t} with $ \alpha+\beta =0 $ , $ \lambda=0.6 $ in random inoculation for \textit{g} = 0.1, 0.3 (upper) 0.5, 0.7 (lower)} \label{Rrt2}
\end{figure} 

\begin{figure}[h]
\begin{center}
\includegraphics[scale=0.32]{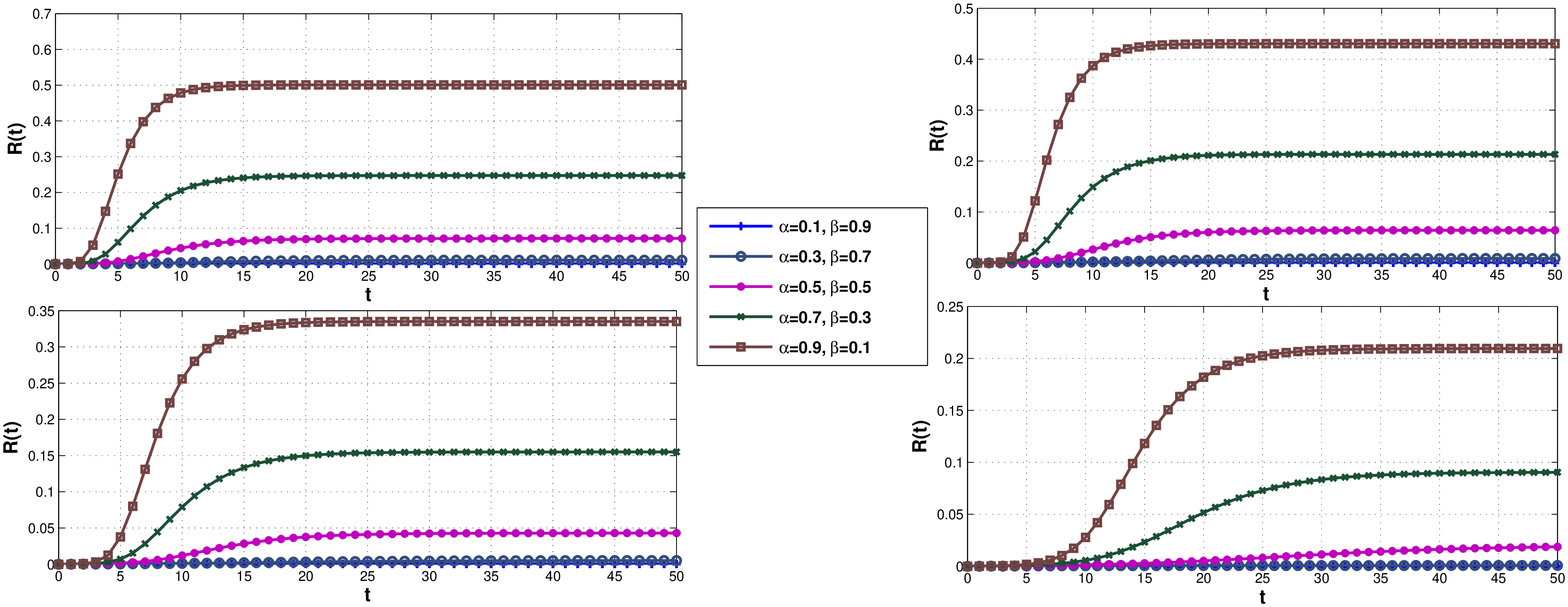}
\end{center}
\caption{\textit{R(t)} vs \textit{t} with $ \alpha+\beta =1 $ , $ \lambda=0.6 $ in random inoculation for \textit{g} = 0.1, 0.3 (upper) 0.5, 0.7 (lower) } \label{Rrt3}
\end{figure} 

\begin{figure}[h]
\begin{center}
\includegraphics[scale=0.32]{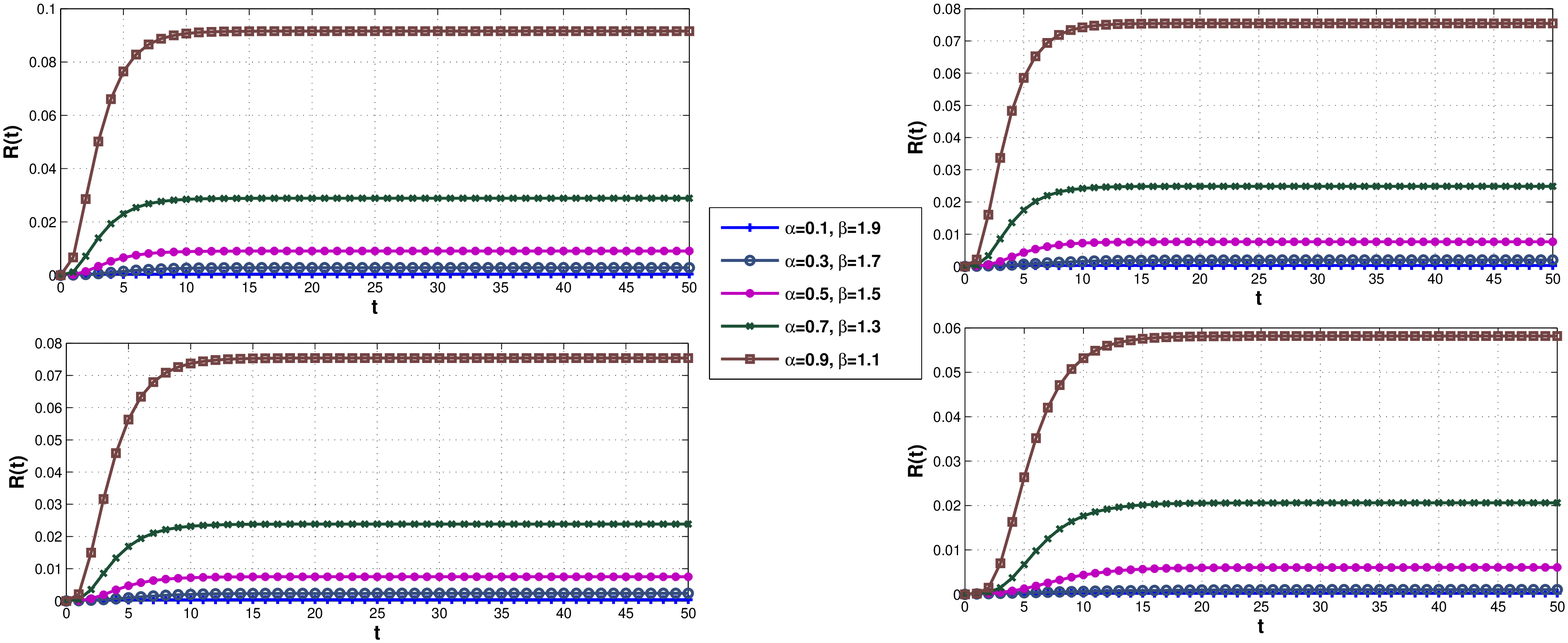}
\end{center}
\caption{ \textit{R(t)} vs \textit{t} with $ \alpha+\beta =2 $ , $ \lambda=0.6 $ in random inoculation for  \textit{g}= 0.1, 0.3 (upper) 0.5, 0.7 (lower) } \label{Rrt4}
\end{figure}

\begin{figure}[h]
\begin{center}
\includegraphics[scale=0.32]{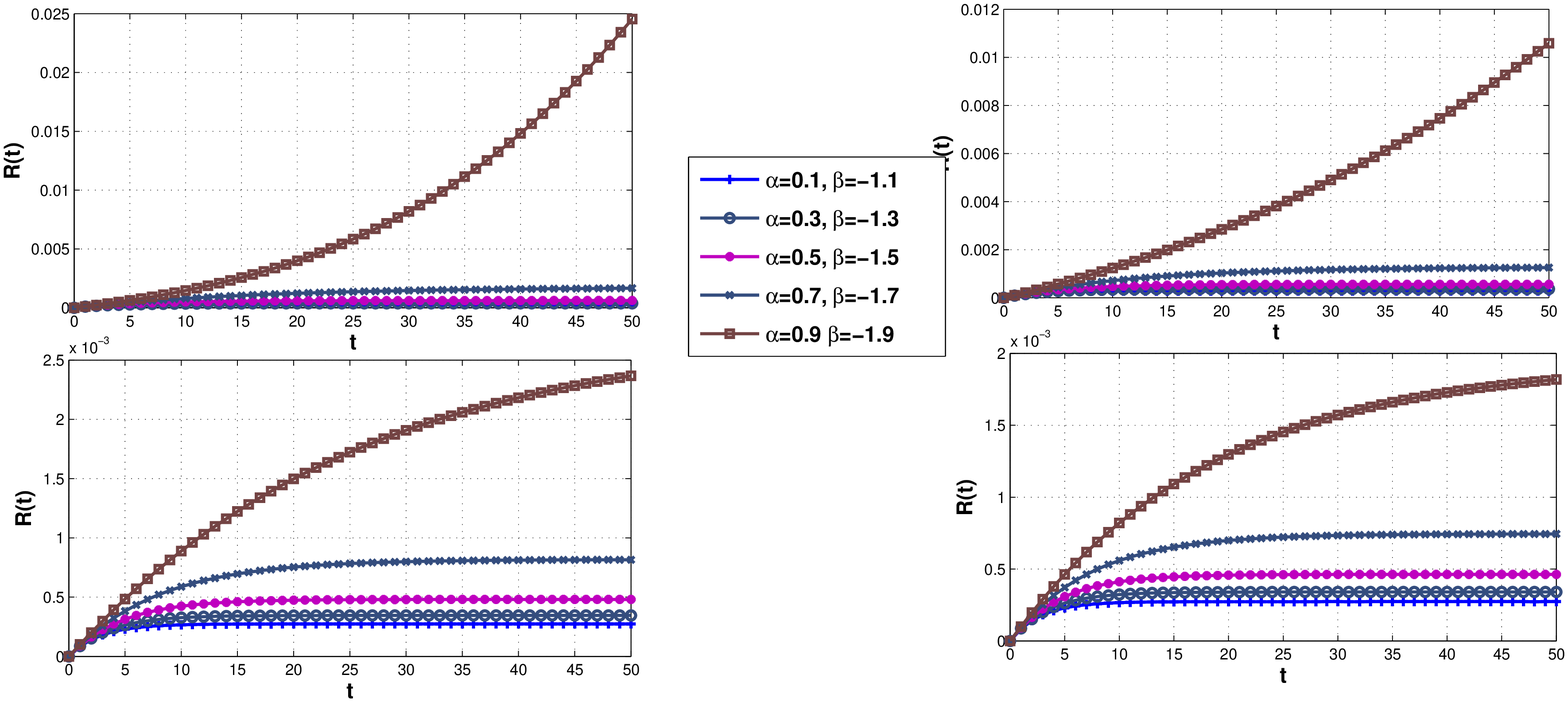}
\end{center}
\caption{\textit{R(t)} vs \textit{t} with $ \alpha+\beta =-1 $ , $ \lambda=0.6 $ in targeted inoculation for  \textit{g}= 0.05, 0.1 (upper) 0.15, 0.2 (lower)} \label{Rtt1}
\end{figure} 

\begin{figure}[h]
\begin{center}
\includegraphics[scale=0.32]{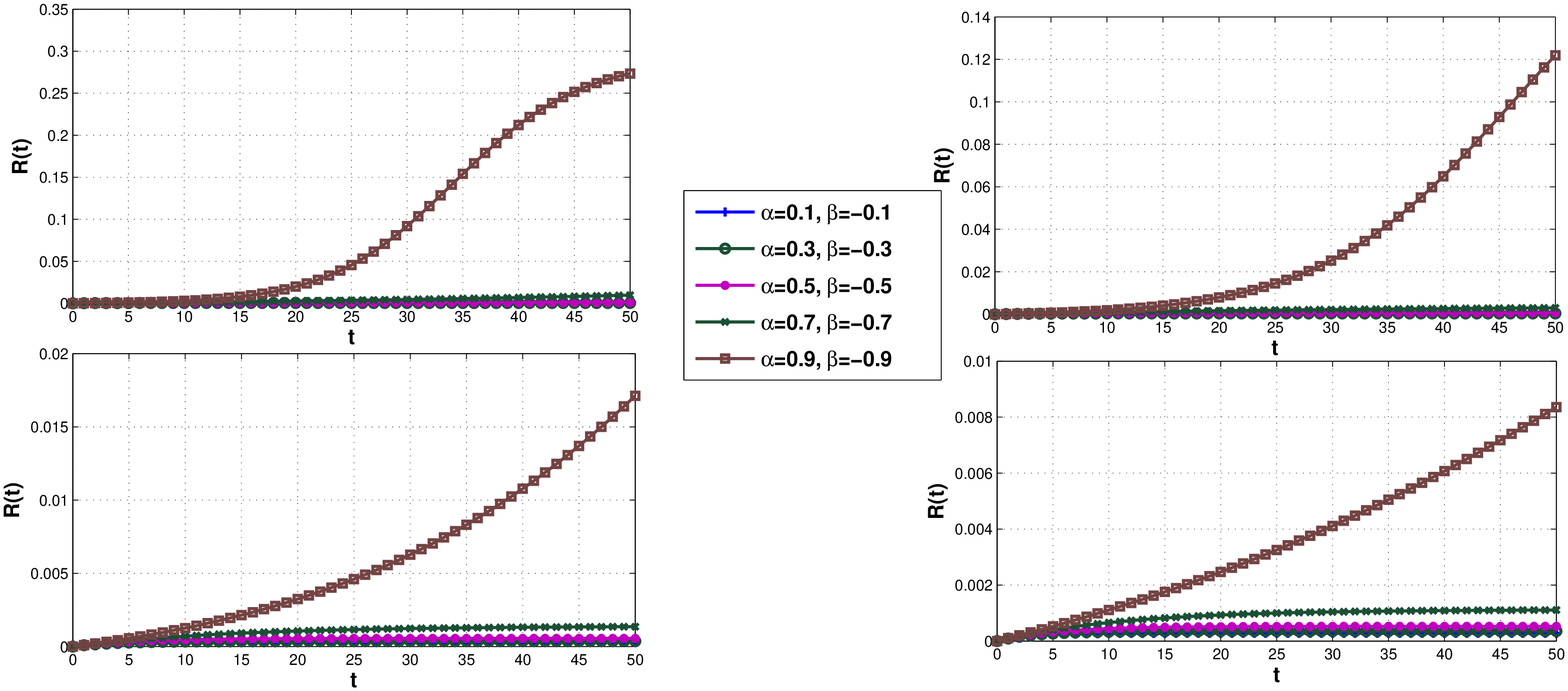}
\end{center}
\caption{\textit{R(t)} vs \textit{t} with $ \alpha+\beta = 0 $ , $ \lambda=0.6 $ in targeted inoculation for  \textit{g}= 0.05, 0.1 (upper) 0.15, 0.2 (lower) } \label{Rtt2}
\end{figure} 

\begin{figure}[h]
\begin{center}
\includegraphics[scale=0.32]{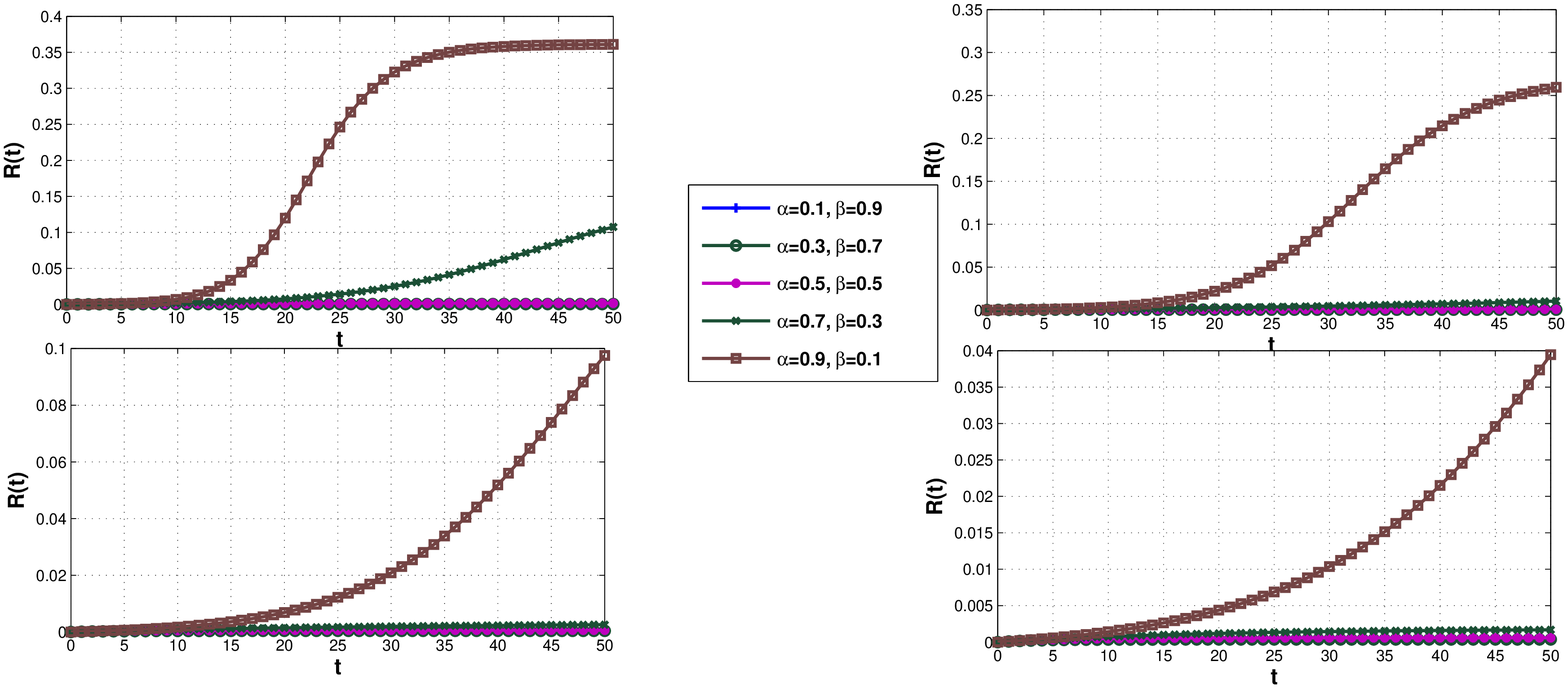}
\end{center}
\caption{ \textit{R(t)} vs \textit{t} with $ \alpha+\beta =1 $ , $ \lambda=0.6 $ in targeted inoculation for  \textit{g}= 0.05, 0.1 (upper) 0.15, 0.2 (lower) } \label{Rtt3}
\end{figure} 

\begin{figure}[h]
\begin{center}
\includegraphics[scale=0.32]{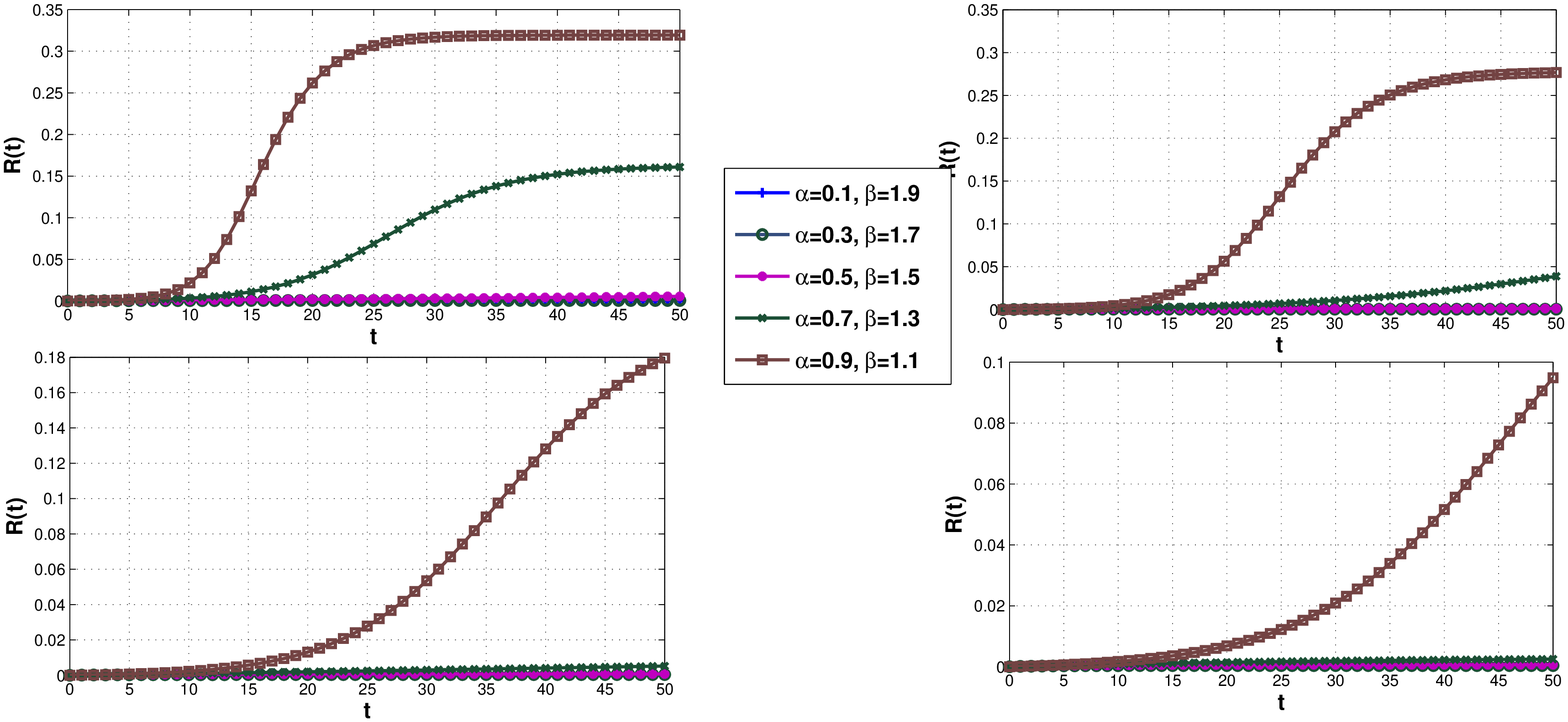}
\end{center}
\caption{ \textit{R(t)} vs \textit{t} with $ \alpha+\beta =2 $ , $ \lambda=0.6 $ in targeted inoculation for  \textit{g}= 0.05, 0.1 (upper) 0.15, 0.2 (lower) } \label{Rtt4}
\end{figure}

\section{Conclusion}
In presented study, the modified SIR model has been proposed by considering standard SIR rumor spreading model with degree dependent tie strength of nodes and nonlinear spread of rumor. The two parameters nonlinear exponent $ \alpha $ and degree dependent tie strength exponent $ \beta $ have been introduced. In modified rumor spreading model, finite rumor spreading threshold has been found for finite scale free networks while fixed rumor threshold has been found for any size of network when $ \alpha+\beta+2<\gamma $. Random and targeted inoculation schemes have been introduced in the proposed modified model. Rumor threshold in targeted inoculation scheme is found to be higher than the random inoculation. On the other hand the rumor threshold in random inoculation is higher than the modified model without inoculation. It has also been observed that for scale free networks targeted inoculation scheme is successful in suppressing the rumor spreading in the network, since it requires to inoculate less number of 
nodes than random inoculation. Further $ \alpha $ is found to be more sensitive than $ \beta $, as it affects more to rumor threshold. Finally it is seen that in real world networks finite rumor threshold can be achieved by considering more realistic parameters (degree dependent tie strength of nodes and nonlinear spread of rumor). The targeted inoculation scheme can be successfully applied to suppress the rumor spreading over scale free networks.

\bibliographystyle{acm}
\bibliography{paper_anurag}

\begin{thebibliography}{10}

\bibitem{Bara}
{\sc Albert, R., and Barab\'asi, A.}
\newblock Statistical mechanics of complex networks.
\newblock {\em Rev. Mod. Phys. 74}, 1 (Jan 2002), 47--97.

\bibitem{Zhou}
{\sc Bai, W.~J., Zhou, T., and Wang, B.~H.}
\newblock Immunization of susceptible–infected model on scale-free networks.
\newblock {\em Physica A: Statistical Mechanics and its Applications 384}, 2
  (2007), 656--662.

\bibitem{Imm2}
{\sc Barabasi, A.-L., and Jeong, H.}
\newblock Error and attack tolerance of complex networks.
\newblock {\em Nature 406}, 14 (Feb 2000), 378--382.

\bibitem{BA}
{\sc Barabási, A.-L., and Albert, R.}
\newblock Emergence of scaling in random networks.
\newblock {\em Science 286}, 5439 (1999), 509--512.

\bibitem{Barrat}
{\sc Barrat, A., Barthélemy, M., Pastor-Satorras, R., and Vespignani, A.}
\newblock The architecture of complex weighted networks.
\newblock {\em Proc. Natl. Acad. Sci. USA 101}, 11 (March 2004), 3747--3752.

\bibitem{Past2}
{\sc Bogu\~n\'a, M., Pastor-Satorras, R., and Vespignani, A.}
\newblock Absence of epidemic threshold in scale-free networks with degree
  correlations.
\newblock {\em Phys. Rev. Lett. 90}, 2 (Jan 2003), 028701.

\bibitem{CohenRes}
{\sc Cohen, R., Erez, K., ben Avraham, D., and Havlin, S.}
\newblock Resilience of the internet to random breakdowns.
\newblock {\em Phys. Rev. Lett. 85}, 21 (Nov 2000), 4626--4628.

\bibitem{Cohenbr}
{\sc Cohen, R., Erez, K., ben Avraham, D., and Havlin, S.}
\newblock Breakdown of the internet under intentional attack.
\newblock {\em Phys. Rev. Lett. 86}, 16 (Apr 2001), 3682--3685.

\bibitem{DK}
{\sc Daley, D., Gani, J., and Gani, J.}
\newblock {\em Epidemic Modelling: An Introduction}.
\newblock Cambridge University Press, Cambridge, UK, 2001.

\bibitem{Fu}
{\sc Fu, X., Small, M., Walker, D.~M., and Zhang, H.}
\newblock Epidemic dynamics on scale-free networks with piecewise linear
  infectivity and immunization.
\newblock {\em Phys. Rev. E 77}, 3 (Mar 2008), 036113.

\bibitem{Madhav}
{\sc Kuhlman, C., Kumar, V., Marathe, M., Ravi, S., and Rosenkrantz, D.}
\newblock Effects of opposition on the diffusion of complex contagions in
  social networks: An empirical study.
\newblock In {\em Social Computing, Behavioral-Cultural Modeling and
  Prediction}, vol.~6589 of {\em Lecture Notes in Computer Science}. Springer
  Berlin / Heidelberg, 2011, pp.~188--196.

\bibitem{Liu}
{\sc Liu, Z.~H., Lai, Y., and Ye, N.}
\newblock Propagation and immunization of infection on general networks with
  both homogeneous and heterogeneous components.
\newblock {\em Phys. Rev. E 67}, 3 (Mar 2003), 031911.

\bibitem{Madar}
{\sc Madar, N., Kalisky, T., Cohen, R., Avraham, D., and Havlin, S.}
\newblock Immunization and epidemic dynamics in complex networks.
\newblock {\em Euro. Phy, J B 38}, 2 (2004), 269--276.

\bibitem{MK}
{\sc Maki, D., and Thompson, M.}
\newblock {\em Mathematical models and applications: with emphasis on the
  social, life, and management sciences}.
\newblock Prentice-Hall, NJ, 1973.

\bibitem{Lloyd}
{\sc May, R.~M., and Lloyd, A.~L.}
\newblock Infection dynamics on scale-free networks.
\newblock {\em Phys. Rev. E 64}, 6, 066112.

\bibitem{Nekovee1}
{\sc Moreno, Y., Nekovee, M., and Pacheco, A.~F.}
\newblock Dynamics of rumor spreading in complex networks.
\newblock {\em Phys. Rev. E 69\/} (Jun 2004), 066130.

\bibitem{Morsat}
{\sc Moreno, Y., Pastor-Satorras, R., and Vespignani, A.}
\newblock Epidemic outbreaks in complex heterogeneous networks.
\newblock {\em Euro. Phy, J B 26}, 4 (2002), 521--529.

\bibitem{Nekovee}
{\sc Nekovee, M., Moreno, Y., Bianconi, G., and Marsili, M.}
\newblock Theory of rumor spreading in complex social networks.
\newblock {\em Phy. A 374}, 1 (2007), 457--470.

\bibitem{Newep}
{\sc Newman, M.}
\newblock Spread of epidemic disease on networks.
\newblock {\em Phys. Rev. E 66}, 1 (Jul 2002), 016128.

\bibitem{New}
{\sc Newman, M.}
\newblock The structure and function of complex networks.
\newblock {\em Siam Review 45}, 2 (2003), 167--256.

\bibitem{Onnela}
{\sc Onnela, J.~P., Saram\"{a}ki, J., Hyv\"{o}nen, J., Szab\'{o}, G., Lazer,
  D., Kaski, K., Kert\'{e}sz, J., and Barab\'{a}si, A.~L.}
\newblock {Structure and tie strengths in mobile communication networks}.
\newblock {\em Proceedings of the National Academy of Sciences 104}, 18 (May
  2007), 7332--7336.

\bibitem{Int}
{\sc Pasor-Satorras, R., and Vespignani, A.}
\newblock {\em Evolution and Structures of the Internet: a statistical Physics
  Approach}.
\newblock Cambridge University Press, Cambridge, UK, 2004.

\bibitem{Vespbook}
{\sc Pastor-Satorras, R., and A.Vespignani}.
\newblock Epidemics and immunization in scale-free networks.
\newblock In {\em Handbook of Graphs and Networks: From the Genome to the
  Internet}, S.~Bornholdt and H.~G. Schuster, Eds. Wiley-VCH, Berlin, 2002,
  pp.~113--132.

\bibitem{Imm1}
{\sc Pastor-Satorras, R., and Vespignani, A.}
\newblock Epidemic spreading in scale-free networks.
\newblock {\em Phys. Rev. Lett. 86}, 14 (Apr 2001), 3200--3203.

\bibitem{Pastep}
{\sc {Pastor-Satorras}, R., and {Vespignani}, A.}
\newblock {Epidemic dynamics in finite size scale-free networks}.
\newblock {\em Physical review. E 65}, 3 (Mar. 2002), 035108.

\bibitem{Past}
{\sc Pastor-Satorras, R., and Vespignani, A.}
\newblock Immunization of complex networks.
\newblock {\em Phys. Rev. E 65}, 3 (Feb 2002), 036104.

\bibitem{Anu}
{\sc Singh, A., and Singh, Y.~N.}
\newblock Rumor spreading and inoculation of nodes in complex networks.
\newblock In {\em Proceedings of the 21st international conference companion on
  World Wide Web\/} (2012), WWW '12 Companion, ACM, pp.~675--678.

\bibitem{Sud}
{\sc Sudbury, A.}
\newblock The proportion of the population never hearing a rumor.
\newblock {\em Journal of Applied Probability 22}, 2 (June 1985), 443--446.

\bibitem{WS}
{\sc Watts, D., and Strogatz, S.}
\newblock Collective dynamics of small-world networks.
\newblock {\em Nature 393}, 6684 (Jun 1998), 440--442.

\bibitem{Yan}
{\sc Yan, G., Fu, Z.-Q., and Chen, G.}
\newblock Epidemic threshold and phase transition in scale-free networks with
  asymmetric infection.
\newblock {\em European Physical Journal B 65\/} (oct 2008), 591.

\bibitem{Zanet}
{\sc Zanette, D.}
\newblock Critical behavior of propagation on small-world networks.
\newblock {\em Phys. Rev. E 64}, 4 (Oct 2001), 050901.

\bibitem{Zanet1}
{\sc Zanette, D.}
\newblock Dynamics of rumor propagation on small-world networks.
\newblock {\em Phys. Rev. E 65}, 4 (Mar 2002), 041908.

\bibitem{Zl}
{\sc Zhou, T., Liu, J.~G., Bai, W.~J., Chen, G., and Wang, B.~H.}
\newblock Behaviors of susceptible-infected epidemics on scale-free networks
  with identical infectivity.
\newblock {\em Phys. Rev. E 74}, 5 (Nov 2006), 056109.

\end{thebibliography}

\end{document}